\title{V3rified: Revelation vs Non-Revelation Mechanisms for Decentralized Verifiable Computation}
\author{Tiantian Gong \thanks{Yale University. Email: tiantian.gong@yale.edu}\and Aniket Kate \thanks{Purdue University \& Supra Research. Email: aniket@purdue.edu} \and Alexandros Psomas\thanks{Purdue University. Email: apsomas@purdue.edu} \and Athina Terzoglou\thanks{Purdue University. Email: aterzogl@purdue.edu}}
\date{}
\DeclareMathOperator*{\argmin}{arg\,min}
\DeclareMathOperator*{\argmax}{arg\,max}
\newtheorem{theorem}{Theorem}[section]
\newtheorem{definition}{Definition}[section]
\newtheorem{proposition}{Proposition}[section]
\newtheorem{lemma}{Lemma}[section]
\newtheorem{claim}{Claim}[section]
\newtheorem{example}{Example}[section]
\newcommand{\Rule}{\mathcal{R}}
\newcommand{\ks}{{k^*}}
\newcommand{\Mech}{\mathcal{M}}
\newcommand{\bid}{\mathbf{b}}
\begin{document}

\maketitle

\begin{abstract}
In the era of Web3, decentralized technologies have emerged as the cornerstone of a new digital paradigm. Backed by a decentralized blockchain architecture, the Web3 space aims to democratize all aspects of the web. From data-sharing to learning models, outsourcing computation is an established, prevalent practice.
 Verifiable computation makes this practice trustworthy as clients/users can now efficiently validate the integrity of a computation. As verifiable computation gets considered for applications in the Web3 space, decentralization is an important requirement that guarantees system reliability and ensures that no single entity can suppress clients.  At the same time, decentralization needs to be balanced with efficiency: clients want their computations done as quickly as possible.

Motivated by these issues, we study the trade-off between decentralization and efficiency when outsourcing computational tasks to strategic, rational solution providers. Specifically, we examine this trade-off when the client employs (1) revelation mechanisms, i.e. auctions, where solution providers bid their desired reward for completing the task by a specific deadline and then the client selects which of them will do the task and how much they will be rewarded, and (2) simple, non-revelation mechanisms, where the client commits to the set of rules she will use to map solutions at specific times to rewards and then solution providers decide whether they want to do the task or not. We completely characterize the power and limitations of revelation and non-revelation mechanisms in our model.
\end{abstract}

\section{Introduction}\label{sec:introduction}
Over the last twenty years, computation outsourcing has grown enormously in the Web2 space. For example, the cloud computing market size is projected to be valued at \$980 billion in 2025~\cite{statsweb2}. 
Meanwhile, computation outsourcing in Web3 has also been steadily increasing since 2023, with the market of zero-knowledge proof computation outsourcing hitting \$609 million in June 2024~\cite{statsweb3} and verifiable on-(block)chain randomness service seeing more than a million requests every month~\cite{delphidigital}.

Unlike centralized cloud computing in Web2, Web3 prioritizes \emph{decentralization}, ensuring that the system does not rely on a single party to complete a task. This reduces single points of failure of the center, such as infrequent but catastrophic crashes or compromises, selective censorship of client requests~\cite{BlockchainCensorship}, determining task completion time at its will, or charging high prices. Through decentralization, no single provider can monopolize any solution platform. Aside from maintaining decentralization, 
considering that these computation tasks often come with time constraints (otherwise clients could perform the computations locally), \emph{efficiency}, i.e., clients acquiring solutions as fast as possible, is a crucial property for computation outsourcing platforms. 

Without centralized trust, clients naturally desire to verify computation results. 
Verifiable computation~\cite{gennaro2010non,parno2016pinocchio} allows a computationally limited client to delegate a computation task to more powerful and resourceful servers, while ensuring that the output can be verified efficiently. 
Designing mechanisms that incentivize decentralization in Web3 is still an active field. 
Current platforms for verifiable computation treat decentralization as given. For example, 
Gevulot~\cite{gevulot} and Taiko~\cite{taiko} randomly select a worker for a task submitted by a client and pay the worker a posted price.
Supra dVRF~\cite{SupraVRF} makes the {\em any-trust} assumption: on every set of workers, at least one of them behaves honestly. Assuming that there exist honest workers who always follow the protocol (even if they suffer monetary losses) and never crash, such mechanisms do avoid centralization. However, given the real-world monetary rewards, it is, of course, reasonable to expect that these supposedly honest workers are \emph{rational} and take actions that maximize their utility; decentralization cannot be taken for granted.

Mechanism design, which studies how to craft a set of rules so that self-interested agents who act upon these rules arrive at a desired (for the designer) outcome, has affected a wide range of applications in Web3 space, including pricing oracle services like randomness oracles and off-chain data oracles~\cite{oracle_risks,oracle_survey} and zero-knowledge proof markets~\cite{statsweb3}. The theory of mechanism design has almost exclusively focused on so-called \emph{revelation} mechanisms. Such mechanisms are designed so that agents' optimal action is very simple: ``report your honest preferences.'' The designer has the complex task of finding outcomes in a way that truthfulness, as well as other guarantees (e.g. good revenue in the case of auctions), are satisfied. The focus on revelation mechanisms has been motivated by the \emph{revelation principle,} a fundamental result in mechanism design which suggests that if some outcome is the equilibrium of a non-truthful mechanism, then one can construct a truthful mechanism to implement this outcome. 
However, the restriction to revelation mechanisms often comes at a loss, e.g. in non-Bayesian or prior-free settings~\cite{feng2018end,feng2021revelation}. Furthermore, non-truthful or \emph{non-revelation mechanisms}, e.g., the first-price auction, because of their simplicity, are a lot more common in practice, including the aforementioned case of computation outsourcing in Web3.

In this paper, we are interested in comparing revelation and non-revelation mechanisms. Simply put, our goal is \emph{to characterize the trade-offs between decentralization and efficiency for both revelation and non-revelation mechanisms, in the context of outsourcing computation in Web3}.

\subsection{Our contribution}

Motivated by real-world computation outsourcing markets (see Related work), we introduce a simple model for studying decentralization and efficiency in the presence of strategic behavior. There is a client, with a total reward of $R$, that needs to outsource a computational task, which must be completed before a deadline $T$, and there are $n$ strategic agents/solution providers. Agent $i$ can pay a cost $c_i$ to complete the computational task by time $t_i$; both $c_i$ and $t_i$ are known to the agent and not the client. We assume that agents with higher costs can complete tasks faster.

The client can employ a \emph{revelation} or \emph{non-revelation} mechanism to interact with and incentivize the agents to complete her task. A revelation mechanism first asks agents to report a cost and completion time (essentially, a bid), and then selects an outcome: who does the task (possibly more than one agent) and how much reward they get. A non-revelation mechanism is a set of rules for mapping solutions at specific times to rewards, e.g. ``fastest solution gets the whole reward,'' or ``three fastest solutions split the reward equally.'' For revelation mechanisms, as is standard in mechanism design, we ask for truthfulness and individual rationality (honest agents have non-negative utility when participating); non-revelation mechanisms will be evaluated at their worst pure Nash equilibria.\footnote{Noting that pure Nash equilibria are not guaranteed to exist for arbitrary games; in our positive results we prove that the non-revelation mechanisms we propose do have pure Nash equilibria.} Agents can take actions to maximize their utility (reward minus cost), but cannot misbehave in arbitrary ways. Namely, agent $i$ cannot complete a task faster than time $t_i$.

Intuitively, an outcome is decentralized if many agents submit solutions. Of course, costs can be adversarially selected so that the reward of $R$ does not suffice to cover the cost of even two agents. In the absence of incentives, the number of agents that submit solutions can be maximized by picking the $\ks$ agents with the smallest cost, where $\ks$ is the largest number such that the total cost is at most $R$. We call $\ks$ the \emph{decentralization factor}, and strive to produce $\alpha$-decentralized mechanism outcomes, those where the number of agents submitting solutions is at least $\alpha \ks$. $1$-decentralization implies that a maximum number of $\ks$ agents submit solution, while $\frac{1}{\ks}$-decentralization implies that (in the worst-case) only one agents submits a solution. 

On the other hand, efficient outcomes are those where the first solution is as early as possible. And, again, one can pick costs and times adversarially so that adequately rewarding any two agents to submit solutions (i.e., $\frac{2}{\ks}$-decentralization) is arbitrarily slower than the time the fastest, most expensive agent can produce. Therefore, for an accurate comparison, a better benchmark for efficiency is the best outcome for a \emph{fixed} target decentralization. We say that an outcome is $\alpha$-efficient if it is as fast as the fastest $\alpha$-decentralized outcome. So, intuitively, a $\frac{1}{\ks}$-efficient outcome is one that has a solution at time $\min_{i \in [n]} t_i$, while a $1$-efficient outcome is (in general) slower.

In the absence of strategic behavior, $\alpha$-decentralization is compatible with $\alpha$-efficiency. Since computation is efficiently verifiable, solution providers cannot behave arbitrarily; namely, they cannot submit incorrect solutions to the client's task, as this behavior can be easily detected as dishonest. However, solution providers can report a higher cost, or submit a solution at a later time, if such an action increases their utility.  In this paper, we explore the power and limitations of revelation and non-revelation mechanisms.

In~\Cref{sec:non-revelation} we study non-revelation mechanisms. We prove a tight bound of $1/2$ for the optimal decentralization (Theorems~\ref{thm: half lower bound non-revelation} and~\ref{thm:equalReward}), achieved by the simple non-revelation mechanism that equally splits the reward among all agents that submit a solution. This bound remains tight even for $\ks = 2$. We prove that, as $\ks$ grows, better results are possible. Specifically, it is possible to achieve a decentralization of $1 -\frac{1}{e} - \frac{8}{\ks}$ (which approaches $1 - 1/e \approx 0.63$), with a simple non-revelation mechanism that splits the reward among all agents that submit a solution according to a harmonic sequence, i.e., for some $x$, the fastest solution is rewarded $1/x$, the second fastest $\frac{1}{x+1}$, and so on. Surprisingly, this performance is also asymptotically tight! In~\Cref{thm: strong lower bound non revelation} we prove that no non-revelation mechanism can have decentralization better than $1 - \left( e^{1+\frac{e^2}{2 \ks}} \right)^{-1}$, again, approaching $1- 1/e$. Regarding efficiency, we prove that non-trivial decentralization is incompatible with any non-trivial efficiency guarantee for non-revelation mechanisms (\Cref{thm:efficiency plus decentralization lower bound non revelation}). However, this impossibility can be circumvented by imposing additional structure on the agents' types, such that an agent's true computation time implies her true cost of computation.
In this case, decentralization and efficiency can be combined optimally, as if agents were non-strategic (\Cref{thm: restricted types non revelation}).

In~\Cref{sec: revelation} we study revelation mechanisms. We first show that, with respect to decentralization, revelation mechanisms face the same barriers as non-revelation mechanisms: in an analogue to~\Cref{thm: half lower bound non-revelation}, in~\Cref{thm: revelation decentralization lower bound} we prove that there is no dominant strategy incentive compatibility and individually rational, $\alpha$-decentralized revelation mechanism, for any constant $\alpha > 1/2$.
However, efficiency and decentralization are compatible for revelation mechanisms. We give a mechanism parameterized by $k$ that admits a truthful ex‑post Nash equilibrium and achieves $\frac{\min\{ k, \ks\} -1}{\ks}$-decentralization and $\frac{\min\{ k, \ks\}}{\ks}$-efficiency (\Cref{thm: inverse GSP}). Our mechanism, Inverse Generalized Second Price (I-GSP), resembles the generalized second price auction (GSP), a well-studied \emph{non-truthful mechanism} in auction theory~\cite{edelman2007internet}. In GSP, the $i$-th highest bidder is charged the $(i+1)$-th highest bid. In I-GSP, the agent with the $i$-th smallest cost is rewarded the $(i+1)$-th smallest cost, with a twist to include a fast (and expensive) bidder, in order to guarantee good efficiency.  

In~\Cref{sec: experiments}, we experimentally evaluate our algorithms through two sets of experiments. In the first set, we compare the decentralization factor of the equal reward mechanism $R^{\text{eq}}$ (\Cref{subsec:warm up non revelation}) and the harmonic mechanism $R^{\text{harm}}$ (\Cref{subsec: harmonic}). Our results show that both mechanisms consistently outperform their theoretical worst-case guarantees on these random inputs. In the second set, we compare the fastest solution under the non-revelation harmonic mechanism $R^{\text{harm}}$ with the revelation mechanism inverse generalized second price $\Mech^{\text{I-GSP}}$ (\Cref{sec: revelation}), using the same number of participants. The results confirm that $\Mech^{\text{I-GSP}}$, that prioritizes efficiency, consistently outperforms $R^{\text{harm}}$.

\subsection{Related work}

\textbf{Fee mechanism for computation outsourcing markets in Web3.}
In the proof market of =nil;~\cite{nil} (currently in beta version), a client can submit a transaction specifying the time and price for the request, and whether she desires to prioritize time or price. Solution providers then submit bids, and they are ranked according to their proof generation time and price (in the order specified by the client). The top provider is the selected solver. 
Taiko~\cite{taiko} assigns weights to providers, which are positively correlated with providers' stakes and negatively correlated with their asking prices. For a request, it selects a provider at random according to their weights. 
Wang et al.~\cite{wang2024mechanism} designed a mechanism for prover markets for achieving efficiency (in terms of maximizing social welfare), incentive compatibility, and collusion resistance. The clients are selected via a ``pay-as-bid greedy auction,'' i.e., picking the client transactions that maximize the payment amounts while conforming to a finite batch capacity constraint. The providers are then selected as solvers uniformly at random.  All the above approaches treat decentralization as given. 
Succinct network~\cite{succinct} utilizes an all pay auction~\cite{baye1996all} to allocate proof tasks and rewards: Given $m$ providers, each provider $i$ that bids an amount $b_i$ pays their bid and wins the auction -- thus the prize from the client -- with probability $\frac{b_i^{\alpha}}{\sum_{j=1}^m b_j^{\alpha}}$ where $\alpha$ is a system parameter. 

\textbf{Centralization issues in Web3.}
Heimbach et al.~\cite{heimbach2023ethereum} point out the concentration of Ethereum's relay and builder industry by measuring the Herfindahl-Hirschman Index (HHI) in the two industries. More straightforward statistics also exhibit a centralization trend: The top three MEV-boost~\cite{mevboost} builders in Ethereum can produce more than 90\% of the blocks\cite{mevbuildermarket}; The top five mining pools in Bitcoin can produce approximately 80\% of the blocks~\cite{bitcoinhashrate}.
Previous works~\cite{Gervais13,gencer18,karakostas2022sok} have also more systematically highlighted the concentration of hardware manufacturers, codebases, mining powers, R\&D funding, decision-making, and incident resolution processes in blockchains. 
Wahrstaetter et al.\ \cite{BlockchainCensorship} study censorship by a single entity in prominent permissionless blockchains, and observe that censorship by centralized entities affects not only neutrality but also the security of blockchains.

\textbf{Revelation and non-revelation mechanism design.}  While revelation mechanisms are the dominating paradigm in mechanism design, recent results show that non-revelation mechanisms have surprising theoretical advantages in canonical domains, e.g., prior-independent mechanism design~\cite{feng2018end,feng2021revelation}. Non-revelation mechanisms are prevalent in practice, e.g., the first price auction. A thematically related (to our paper) line of work that studies the design of \emph{contests}, where participants incur a cost and produce an output of a certain quality also, almost exclusively, focuses on non-revelation mechanisms and their equilibria, e.g.,~\cite{Moldovanu2001}; see~\cite{corchon2007theory} for a survey. A major difference with our paper is that contest design ignores decentralization and primarily focuses on maximizing the expected maximum quality or sum of utilities.

\section{Preliminaries}\label{sec: preliminaries}

There is a client interested in outsourcing a computational task that needs to be finished by time $T$, and is willing to provide a reward of $R$; without loss of generality, we normalize this reward to $R = 1$. There are $n$ solution providers that can compute solutions to the client's task; throughout the paper, we refer to them as agents. Agent $i$ has a cost $c_i$ that she can incur to compute a solution at time $t_i \leq T$. We refer to the tuple $(c_i,t_i)$ as agent $i$'s \textit{type}. Agents' types are private. We assume that agents with higher costs can complete tasks faster, i.e., if $c_i>c_j$ then $t_i<t_j$. To ensure, feasibility, we further assume there is at least one agent with $c_i<1$, so that the reward can indeed cover her cost. Agents have quasi-linear utilities; that is, if agent $i$ gets reward $r_i$ and pays a cost $c_i$, her utility is $r_i - c_i$. Agents are rational and will take actions that maximize their utility. For example, in both revelation and non-revelation mechanisms, agents always have the option of not participating, which we will think of as taking the action $\bot$, if their (expected) reward is strictly less than their cost.

\textbf{Revelation and non-revelation mechanisms.} The client can outsource her task by deploying a \emph{revelation}, or a \emph{non-revelation} mechanism.

For non-revelation mechanisms, we consider mechanisms induced by a function that maps a vector of solutions at specific times to reward. We denote with $\bot$ the action of not submitting a solution. A rule $\Rule(a_1,\ldots,a_n) $ outputs a reward vector $(r_1,\ldots,r_n)$, where $a_i$ is the time agent $i$ submitted her solution, or $\bot$ if $i$ did not submit a solution within time $T$. We write $\Rule_i(a_1,\ldots,a_n)$ for the reward of agent $i$ under rules $\Rule$, when the actions taken are $a_1, \ldots, a_n$. Agent $i$ cannot compute solutions in time less than $t_i$. Therefore, we have that for, all $i \in [n]$, $a_i\ge t_i$ (or $a_i = \bot$). Depending on $\Rule$, agent $i$ might maximize her utility by submitting a solution at a time later than $t_i$, or by not submitting a solution at all. We will evaluate non-revelation mechanisms via their (worse) \emph{pure} Nash equilibria. For agents with types $(c_1,t_1), \ldots, (c_n,t_n)$, the strategy profile $(a_1,\ldots,a_n)$ is a pure Nash equilibrium under $\Rule$, if no agent wants to deviate to a different (feasible) action $a'_i$, i.e. for all every agent $i \in [n]$, $\Rule_i(a_i; a_{-i}) - c_i \cdot 1\{ a_i \neq \bot \} \geq \Rule_i(a'_i; a_{-i}) - c_i \cdot 1\{ a'_i \neq \bot \}$, where $1\{ . \}$ is the indicator function and $a_{-i}$ is the set of actions taken by all agents except $i$. We will assume that, if an agent is indifferent between submitting a solution and not submitting a solution, then they prefer submitting a solution. Since we consider (worst-case) pure Nash equilibria, which don't always exist, in our positive results for non-revelation mechanisms we also show pure Nash equilibria exist.

\begin{example}[Reward the fastest solution]\label{example: reward fastest non revelation}
Let $\Rule^{\text{Fast}}$ be the non-revelation mechanism that gives the whole reward to the agent that submitted a solution first, breaking ties lexicographically. Then, $\Rule_i(a_1, \dots, a_n)$ is $1$ for $i = \argmin_{j \in [n]: a_j \neq \bot} \{ a_j \}$, and $0$ otherwise. A pure Nash equilibrium of $\Rule^{\text{Fast}}$ is the (lexicographically in case of a tie) fastest agent, $i^* \in \argmin_{i \in [n]} t_i$, to take action $a_{i^*} = t_i$, and $a_i = \bot$ for all $i \neq i^*$.
\end{example}

A revelation mechanism $\Mech = (x,r)$ consists of two functions: an allocation function $x$ and a reward function $r$, which map \emph{reported} types, or bids, to allocations and rewards. Specifically, both functions take as input agents' bids: the bid $b_i = (\hat{c}_i,\hat{t}_i)$ of agent $i$ consists of a cost $\hat{c}_i$ (possibly different to $c_i$) and a time $\hat{t}_i$ (possibly different to $t_i$). Let $\bid = (b_1,\ldots, b_n)$ denote a vector of bids. $x_i(\bid)$ denotes the probability that agent $i$ is selected for the task and $r_i(\bid)$ is the reward of agent $i$ under bids $\bid$. Similar to the case of non-revelation mechanisms, we assume that agent $i$ cannot report a time smaller than $t_i$, i.e., $\hat{t}_i \geq t_i$, since solving faster than her true time is infeasible, yielding zero reward.  A mechanism can select multiple agents to perform the task, i.e., $\sum_{i \in [n]} x_i(\bid)$ could be (and decentralization implies that it is) larger than $1$. Since the total available reward is $1$, it must be that $\sum_{i\in[n]} r_i(\bid) \le 1$. 

A revelation mechanism is \emph{incentive compatible}(henceforth, \textbf{IC}) if no agent has an incentive to misreport their type, assuming all other agents report their type honestly. That is, a revelation mechanism is IC if truthtelling is a Nash equilibrium i.e., for all types $\bid=((c_1,t_1),\ldots,(c_n,t_n))$ and for all $b_i$ it holds that $r_i( (c_i,t_i); \bid_{-i}) -  c_i \cdot x_i( (c_i,t_i); \bid_{-i}) \geq r_i( b_i; \bid_{-i}) -  c_i \cdot x_i( b_i; \bid_{-i})$. A mechanism is \emph{individually rational} (henceforth, \textbf{IR}) if every agent gets non-negative utility when reporting her true type, i.e., for all $i\in [n]$, for all types $(c_i,t_i)$, and for all $\bid_{-i}$ it holds that $r_i( (c_i,t_i); \bid_{-i}) -  c_i \cdot x_i( (c_i,t_i); \bid_{-i}) \geq 0$. We assume throughout the paper that revelation mechanisms ignore agents that are ``dominated,'' in the sense of having higher cost and slower time than other agents (since those bids can only arise from dishonest behavior).


\begin{example}[Inverse $k$-price auction]\label{example: inverse k price} Assume that the reported costs satisfy $c_1 \leq \dots \leq c_n$, let $\Mech^{\text{I-$k$ price}}$ be the revelation mechanism that allocates the task to the $k$ agents with the smallest (reported) costs, where $k$ is the largest integer such that $k \cdot c_{k+1} \leq 1$. The reward of agent $i$ is $c_{k+1}$, for all $i = 1, \dots, k$ (i.e., the mechanism is an inverse $k$ price auction). $\Mech^{\text{I-$k$ price}}$ is individually rational ($c_i \leq c_{k+1}$) and IC, since if agent $i$ increases her reported cost her reward will either stay the same or she will be ignored (due to observable dishonest behavior: having a high cost and slow time), and if agent $i$ decreases her reported cost her reward is at most $c_{k+1}$ (or $c_k$ for $i > k$, in which case this reward is not sufficient).
\end{example}

\textbf{Decentralization and Efficiency.} The \textit{decentralization factor} is the maximum number of agents that can be given a reward at least as large as their cost.

\begin{definition}[Decentralization factor $\ks$]\label{dfn: decentralization factor}
For a set of $n$ agents with types $(c_1,t_1), \ldots, (c_n,t_n)$, the decentralization factor $\ks$ is the size of the largest set of agents that can be given a reward at least as large as their cost. Formally, $\ks = \max\{ |S|: S \subseteq [n], \sum_{i \in S} c_i \le 1 \}$.
 
\end{definition}

We use $\ks$ as a benchmark for decentralization since it is precisely the maximum number of agents that the system could reward adequately, in the absence of strategic behavior. As we see in the technical sections, perfect decentralization is not compatible with strategic behavior, and therefore seek approximately decentralized outcomes.
A mechanism's outcome\footnote{For non-revelation mechanisms, an outcome is a Nash equilibrium.} is \textit{$\alpha$-decentralized}, for $\alpha\in(0,1]$, if at least $\alpha \ks$ agents are submitting solutions; a mechanism (revelation or non-revelation) is \textit{$\alpha$-decentralized}, $\alpha\in(0,1]$, if all its outcomes are $\alpha$-decentralized.

We are interested in balancing decentralization with efficiency (fast solutions). In terms of efficiency, a first benchmark could be the fastest time an agent can submit a solution, i.e. $\min_{i \in [n]} t_i$. However, this benchmark might not be even approximately possible when combined with non-trivial decentralization, since the agent that can provide the fastest solution might require the entire reward for herself. 
Therefore, we define efficiency in terms of the fastest solution of an $\alpha$-decentralized outcome.

\begin{definition}[Time guarantee $t_{\alpha}^*$]\label{def: timeGuarantee}
   For a set of $n$ agents with types $(c_1,t_1), (c_2,t_2), \ldots, (c_n,t_n)$, the time guarantee $t_{\alpha}^*$, for $\alpha \in (0,1]$, is the fastest $\alpha$-decentralized set where agents are rewarded at least their cost, $t^*_{\alpha} = \min_{\substack{S: |S|\ge \alpha \ks, \sum_{i\in S} c_i\le 1}} \min_{i \in S} t_i.$
\end{definition}

For example, consider an instance with agent types $(1, \epsilon)$, $(1/2, 1)$ and $(1/2, 1)$. Here, $\ks=2$. $t_{1}^*$ is the fastest time in any set with $\ks=2$ participating agents; in the example, $t_{1}^* = 1$ . Similarly, $t_{1/2}^*$ is the fastest time in any set with $\frac{1}{2} \ks = 1$ participating agents; in the example, $t_{1/2}^* = \epsilon$. In general, as the following proposition shows,  as $\alpha$ becomes smaller, $t^*_{\alpha}$ becomes smaller as well.

\begin{proposition}\label{prop:monotonicityTime}
$t^*_\alpha$ is a monotone non-decreasing function of $\alpha$. That is, for all $\alpha < \alpha'$, $t^*_\alpha \le t^*_{\alpha'}$. 
\end{proposition}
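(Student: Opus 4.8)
The plan is to argue directly from the definition of $t^*_\alpha$ as a minimum over feasible sets, showing that the feasible region shrinks (or stays the same) as $\alpha$ increases, so the minimum can only increase. Concretely, fix $\alpha < \alpha'$ in $(0,1]$ and consider the two collections of candidate sets $\mathcal{S}_\alpha = \{ S \subseteq [n] : |S| \ge \alpha \ks, \sum_{i \in S} c_i \le 1 \}$ and $\mathcal{S}_{\alpha'} = \{ S \subseteq [n] : |S| \ge \alpha' \ks, \sum_{i \in S} c_i \le 1 \}$. Since $\alpha \ks \le \alpha' \ks$, any set $S$ with $|S| \ge \alpha' \ks$ also satisfies $|S| \ge \alpha \ks$, and the cost constraint is identical in both cases; hence $\mathcal{S}_{\alpha'} \subseteq \mathcal{S}_\alpha$.

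Next I would invoke the elementary fact that minimizing the same objective (here $\min_{i \in S} t_i$, viewed as a function of $S$) over a smaller set yields a value that is at least as large: $\min_{S \in \mathcal{S}_\alpha} \big( \min_{i \in S} t_i \big) \le \min_{S \in \mathcal{S}_{\alpha'}} \big( \min_{i \in S} t_i \big)$, which is exactly $t^*_\alpha \le t^*_{\alpha'}$. One should note that both minima are over nonempty collections: $\mathcal{S}_{\alpha'}$ is nonempty because $\ks$ agents with total cost at most $1$ exist by the definition of the decentralization factor, and $\alpha' \ks \le \ks$, so the witnessing set of size $\ks$ lies in $\mathcal{S}_{\alpha'} \subseteq \mathcal{S}_\alpha$; thus $t^*_\alpha$ and $t^*_{\alpha'}$ are both well-defined.

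There is essentially no obstacle here — the statement is a monotonicity-of-a-minimum-over-nested-feasible-sets argument. The only point that warrants a sentence of care is the ceiling/floor interpretation of ``at least $\alpha \ks$ agents'': since $|S|$ is an integer, $|S| \ge \alpha \ks$ is equivalent to $|S| \ge \lceil \alpha \ks \rceil$, and $\lceil \alpha \ks \rceil \le \lceil \alpha' \ks \rceil$ when $\alpha \le \alpha'$, so the nesting $\mathcal{S}_{\alpha'} \subseteq \mathcal{S}_\alpha$ holds regardless of how one rounds. I would state this observation explicitly to forestall any worry that the discretization breaks monotonicity, and then conclude.
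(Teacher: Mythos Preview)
Your argument is correct. Both your proof and the paper's rest on the same underlying observation---that a set witnessing $t^*_{\alpha'}$ is already feasible for the looser constraint at $\alpha$---but you package it more cleanly as a nesting-of-feasible-regions statement, $\mathcal{S}_{\alpha'} \subseteq \mathcal{S}_\alpha$, and then invoke monotonicity of the minimum. The paper instead reduces to consecutive integer multiples of $1/\ks$, takes the optimizer $S^*_{\alpha+1/\ks}$, and explicitly removes the agent with the largest $t_i$ to produce a set of size exactly $\alpha\ks$; this removal step is actually unnecessary, since $S^*_{\alpha+1/\ks}$ itself already satisfies $|S| \ge \alpha\ks$ and the objective $\min_{i\in S} t_i$ is unchanged by the deletion. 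So your route is slightly more direct and avoids both the induction over integer thresholds and the superfluous pruning, while the paper's construction makes the witness for $t^*_\alpha$ explicit but at the cost of an extra step. Your remark on the ceiling interpretation is a nice touch and handles something the paper leaves implicit.
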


\begin{proof}
    It suffices to show the proposition for $\alpha \in \{ \frac{1}{\ks}, \frac{2}{\ks},\cdots, 1 \}$, where $\alpha \ks$ is an integer.
    Let $S^*_{\alpha + \frac{1}{\ks}} $ be the set that achieves $t^*_{\alpha+\frac{1}{\ks}}$, i.e., $S^*_{\alpha+ \frac{1}{\ks}} \in \min_{\substack{S: |S|\ge (\alpha+\frac{1}{\ks}) \ks, \sum_{i\in S} c_i\le 1}} \min_{i \in S} t_i$.
    Let $S = S^*_{\alpha+\frac{1}{\ks}} \setminus \{\argmax_{i\in S^*_{\alpha+\frac{1}{\ks}}} t_i\}$. The set $S$ is a of size $\alpha \ks$ since $|S| = \alpha \ks$ with cost $\sum_{i\in S}c_i < \sum_{i\in S^*_{\alpha+\frac{1}{\ks}
    }}c_i \le 1 $. Therefore
    $t^*_\alpha \le \min_{i\in S} t_i = \min_{i \in S^*_{\alpha+1}} t_i  = t^*_{\alpha+1}.$
\end{proof}

We say that an outcome with a set $S$ of participating agents is \textit{$\beta$-efficient} if $\min_{i\in S} t_i \le t^*_\beta$; a mechanism (revelation or non-revelation) is \textit{$\beta$-efficient}, $\beta \in (0,1]$, if all its outcomes are $\beta$-efficient. Note, that perhaps counter-intuitively, as $\beta$ becomes smaller, $\beta$-efficiency is a \emph{stronger} guarantee. That is, an $\epsilon$-efficient mechanism, for a small $\epsilon > 0$, has faster solutions than a $1$-efficient mechanism. The following definition and claim will be useful when arguing the efficiency of our mechanisms.

\begin{definition}[$k$-Best Set]
    For agents with types $(c_1,t_1), \ldots, (c_n,t_n)$, such that $c_1\le\ldots\le c_n$, and for a $k \le\ks$ the \emph{$k$-best set} is defined as $S^*_k=\{1,\ldots,k-1\}\cup\left\{\argmin_i\left(t_i:c_i + \sum_{j=1}^{k-1} c_j\le1\right)\right\}$
\end{definition}

The $k$-Best set is not unique, but all $k$-Best sets have the same total cost.

\begin{claim}\label{claim:BestSetEfficient}
    For agents with types $(c_1,t_1),\ldots,(c_n,t_n)$, such that $c_1\le\ldots\le c_n$, with decentralization factor $\ks$, for any $k\le\ks$ the $k$-Best set is $\frac{k}{\ks}$-efficient.  
\end{claim}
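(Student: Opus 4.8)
The plan is to unfold both definitions and reduce the claim to a short combinatorial argument. Since $\frac{k}{\ks}\cdot\ks=k$, the benchmark $t^*_{k/\ks}$ equals $\min_{i\in S}t_i$ minimized over all feasible sets $S$ (i.e., sets with $\sum_{i\in S}c_i\le1$) of size at least $k$; such sets exist precisely because $k\le\ks$. Fix a set $S'$ attaining this minimum, and let $i^*\in\argmin_{i\in S'}t_i$, so that $t^*_{k/\ks}=t_{i^*}$. To prove $\min_{i\in S^*_k}t_i\le t^*_{k/\ks}$, it suffices to produce some agent in the $k$-Best set $S^*_k$ whose completion time is at most $t_{i^*}$.

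I would then split into two cases. If $i^*\in\{1,\dots,k-1\}$, then $i^*$ already lies in $S^*_k$ by definition and we are done. Otherwise, the key step is to show that $i^*$ is an admissible choice for the $\argmin$ in the definition of $S^*_k$, i.e., that $c_{i^*}+\sum_{j=1}^{k-1}c_j\le1$. To see this, note $S'\setminus\{i^*\}$ has at least $k-1$ elements; take any $k-1$ of them, say forming a set $A$. Because the costs are sorted ($c_1\le\dots\le c_n$), the sum of the costs of any $k-1$ agents is at least the sum of the $k-1$ smallest costs, so $\sum_{j\in A}c_j\ge\sum_{j=1}^{k-1}c_j$. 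Since $A\cup\{i^*\}\subseteq S'$ and costs are non-negative, $c_{i^*}+\sum_{j=1}^{k-1}c_j\le c_{i^*}+\sum_{j\in A}c_j\le\sum_{i\in S'}c_i\le1$, as needed. Consequently the last element $j^*$ of $S^*_k$ --- which is defined to minimize $t$ over all admissible agents --- satisfies $t_{j^*}\le t_{i^*}$, and since $j^*\in S^*_k$ this gives $\min_{i\in S^*_k}t_i\le t_{i^*}=t^*_{k/\ks}$.

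I do not expect a real obstacle. The only points needing care are well-definedness (the $\argmin$ defining $S^*_k$ is over a non-empty set, e.g.\ agent $k$ works since $\sum_{j=1}^{k}c_j\le\sum_{j=1}^{\ks}c_j\le1$ when $k\le\ks$) and the ``exchange''-style inequality $\sum_{j\in A}c_j\ge\sum_{j=1}^{|A|}c_j$ for sorted costs, both of which are routine. The mild conceptual point is recognizing that the worst case for the efficiency of $S^*_k$ is captured by replacing the fastest agent of an optimal feasible size-$k$ set with the fastest agent compatible with the $k-1$ cheapest agents; the cost-budget comparison above is exactly what makes that exchange legitimate.
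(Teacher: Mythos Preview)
Your proof is correct and follows essentially the same approach as the paper's: both hinge on the observation that $\sum_{j=1}^{k-1}c_j$ is the cheapest possible sum of $k-1$ costs, which forces $c_{i^*}+\sum_{j=1}^{k-1}c_j\le 1$ whenever $i^*$ lies in a feasible size-$k$ set. The paper phrases this as a proof by contradiction (assuming $i^*\notin S^*_k$ forces $c_{i^*}+\sum_{j=1}^{k-1}c_j>1$, contradicting feasibility of the witnessing set), whereas you give the direct version and are a bit more explicit about the case $i^*\in\{1,\dots,k-1\}$ and about well-definedness of the $\argmin$; these are cosmetic differences rather than a genuinely different route.
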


\begin{proof}
    For the sake of contradiction, assume that the $k$-best set $S^*_k$ is not $\frac{k}{\ks}$-efficient, i.e. $\min_{i\in S^*_k } t_i > t^*_{\frac{k}{\ks}}$. By \Cref{def: timeGuarantee}, there exist a  set that is $\frac{k}{\ks}$-decentralized and $\frac{k}{\ks}$-efficient, let that be $\hat{S}$. Let $i^*\in\hat{S}$ be the index that $t_{i^*} = t^*_\frac{k}{\ks}$. Since $\hat{S}$ is a feasible set, we have that $c_{i^*} + \sum_{j\in \hat{S} \setminus \{i^*\}} c_j \le 1$. By construction of $S^*_k$, since $i^*\notin \hat{S}$ it must be that $c_{i^*} + \sum_{j=1}^{k-1}c_j > 1$ otherwise it would be in $S^*_k$. That contradicts the feasibility of $\hat{S}$, since $\sum_{j=1}^{k-1}c_j$ is the cost of the cheapest set of size $k-1$. 
\end{proof}

In the absence of strategic behavior, it is possible to simultaneously satisfy $\alpha$-decentralization and $\alpha$-efficiency. 
Our definitions allow the comparison of an $\alpha$-decentralized outcome ($\alpha \ks$ agents participating) with a $\beta$-efficient outcome (the fastest set with $\beta \ks$ participating agents). For $\beta > \alpha$ the monotonicity of the time guarantee,~\Cref{prop:monotonicityTime}, implies $t^*_{\beta} \ge t^*_{\alpha}$.

\paragraph{Useful technical facts.}
Before the technical sections, we state two lemmas that will be useful throughout the paper.

\begin{lemma}
    \label{lemma:upperBoundHarmonic}
For $m$ non-negative real numbers, $0 \le x_1\le x_2 \le \ldots \le x_m$ such that $\sum_{i=1}^m x_i \le 1$, it holds that for all $i \in [m]$: $x_i \le \frac{1}{m-i+1}$.
\end{lemma}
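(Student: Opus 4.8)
The plan is to fix an arbitrary index $i \in [m]$ and derive the bound by lower-bounding the total sum using only the tail terms $x_i, x_{i+1}, \ldots, x_m$. First I would invoke non-negativity: discarding $x_1, \ldots, x_{i-1}$ can only decrease the sum, so $\sum_{j=i}^m x_j \le \sum_{j=1}^m x_j \le 1$. Next I would invoke the sortedness hypothesis $x_i \le x_{i+1} \le \cdots \le x_m$: each of the $m - i + 1$ terms in the tail is at least $x_i$, hence $\sum_{j=i}^m x_j \ge (m - i + 1)\, x_i$. Chaining the two inequalities yields $(m - i + 1)\, x_i \le 1$, i.e. $x_i \le \frac{1}{m - i + 1}$. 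Since $i \in [m]$ was arbitrary, the claim follows.

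There is essentially no obstacle in this argument; the only point requiring a little care is the index bookkeeping — the tail $\{i, i+1, \ldots, m\}$ has exactly $m - i + 1$ elements — together with keeping straight that non-negativity is what licenses dropping the initial terms, while monotonicity is what licenses replacing each tail term by the smaller value $x_i$. One can also note the bound is tight: taking $x_1 = \cdots = x_{i-1} = 0$ and $x_i = x_{i+1} = \cdots = x_m = \frac{1}{m - i + 1}$ gives a sorted sequence summing to exactly $1$ that attains equality at coordinate $i$.
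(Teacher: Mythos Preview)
Your argument is correct and matches the paper's proof essentially line for line: both restrict to the tail $\sum_{j=i}^m x_j \le 1$ by non-negativity, then use sortedness to replace each tail term by $x_i$ and conclude $(m-i+1)x_i \le 1$. The tightness remark you add is a nice bonus but not needed for the lemma.
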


\begin{proof}
    We have that $1 \geq \sum_{k=i}^m  x_k = x_m + x_{m-1} + \dots + x_i \geq x_i + x_i + \dots + x_i = (m-i+1) x_i$. Therefore, $x_i \le \frac{1}{m-i+1}$.
\end{proof}

\begin{lemma}[\cite{friedman2015dynamic}] \label{lemma:sumBounds}
    For natural numbers $b>a>1$,
    \[ \ln(\frac{b}{a-1}) - \frac{1}{2a-2} \le \sum_{j=a}^b \frac{1}{j}  \le \ln(\frac{b}{a-1}).\]
\end{lemma}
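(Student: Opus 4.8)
The plan is to bound the finite harmonic sum $\sum_{j=a}^{b} \frac{1}{j}$ by comparing it with the integral $\int \frac{dx}{x}$, using two elementary quadrature estimates that exploit the fact that $x \mapsto 1/x$ is both decreasing and convex on $(0,\infty)$. Since $a$ is a natural number with $a > 1$, every interval $[j-1,j]$ with $a \le j \le b$ avoids the pole at $0$, so all the integrals below are well defined.

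\textbf{Upper bound.} Because $1/x$ is decreasing, I have $\frac{1}{j} \le \int_{j-1}^{j} \frac{dx}{x}$ for each $j$ with $a \le j \le b$. Summing over $j = a, \dots, b$ and telescoping the integral gives $\sum_{j=a}^{b} \frac{1}{j} \le \int_{a-1}^{b} \frac{dx}{x} = \ln\frac{b}{a-1}$, which is exactly the claimed upper bound.

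\textbf{Lower bound.} Because $1/x$ is convex, the trapezoidal estimate overshoots the integral: for each $j$ with $a \le j \le b$, $\ln\frac{j}{j-1} = \int_{j-1}^{j} \frac{dx}{x} \le \frac{1}{2}\left(\frac{1}{j-1} + \frac{1}{j}\right)$, since the chord joining $(j-1, \tfrac{1}{j-1})$ and $(j, \tfrac1j)$ lies above the graph. Summing over $j = a, \dots, b$, the left-hand side telescopes to $\ln\frac{b}{a-1}$, while the right-hand side equals $\frac{1}{2}\left(\sum_{j=a}^{b}\frac{1}{j} + \sum_{j=a-1}^{b-1}\frac{1}{j}\right) = \sum_{j=a}^{b} \frac{1}{j} + \frac{1}{2(a-1)} - \frac{1}{2b}$, after re-indexing the second sum via $\sum_{j=a-1}^{b-1}\frac{1}{j} = \sum_{j=a}^{b}\frac{1}{j} + \frac{1}{a-1} - \frac{1}{b}$. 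Rearranging yields $\sum_{j=a}^{b} \frac{1}{j} \ge \ln\frac{b}{a-1} - \frac{1}{2(a-1)} + \frac{1}{2b} \ge \ln\frac{b}{a-1} - \frac{1}{2a-2}$, where the last step simply discards the positive slack term $\frac{1}{2b}$.

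The argument is essentially mechanical once the two quadrature comparisons are in place; there is no real obstacle, and the only point needing a little care is the index bookkeeping in the re-indexing step and the harmless discarding of the $\frac{1}{2b}$ term. Alternatively, one can invoke the Euler--Maclaurin-type estimate of~\cite{friedman2015dynamic} directly, as the statement indicates.
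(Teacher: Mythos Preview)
Your argument is correct: the upper bound follows from the standard comparison $\tfrac{1}{j} \le \int_{j-1}^{j}\frac{dx}{x}$ (monotonicity), and the lower bound follows from the trapezoidal overestimate for convex integrands, with the index bookkeeping handled properly. The paper itself does not prove this lemma---it is simply quoted from~\cite{friedman2015dynamic}---so your self-contained elementary derivation is a welcome addition rather than a deviation from any approach in the paper.
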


\section{Non-Revelation Mechanisms}\label{sec:non-revelation}

In this section, we study non-revelation mechanisms. We first prove a simple, tight bound of $1/2$ for the optimal decentralization (Theorems~\ref{thm: half lower bound non-revelation} and~\ref{thm:equalReward}). This bound can be improved to $1-1/e$ (also in a tight way, under a natural assumption) as $\ks$ grows (Theorems~\ref{thm:harmonicRewards} and~\ref{thm: strong lower bound non revelation}). In~\Cref{subsec: efficiency lower bound} we prove that, unfortunately, non-trivial decentralization is incompatible with efficiency (\Cref{thm:efficiency plus decentralization lower bound non revelation}). Finally, we show how to bypass~\Cref{thm:efficiency plus decentralization lower bound non revelation} in a very strong way: assuming some structure on the agents' types, decentralization and efficiency can be combined optimally, as well as in the case where agents are non-strategic (\Cref{thm: restricted types non revelation}).

\subsection{Warm-up: the Equal Reward mechanism is asymptotically optimal}\label{subsec:warm up non revelation}

We start by proving a simple, \emph{tight} bound for decentralization: in the worst-case, one cannot incentivize (in equilibrium) strictly more than $\ks/2$ agents to participate in a non-revelation mechanism (\Cref{thm: half lower bound non-revelation}). A simple non-revelation mechanism, rewarding all participating agents equally, has, in equilibrium, at least $\ks/2$ participants (\Cref{thm:equalReward}).

\begin{theorem}\label{thm: half lower bound non-revelation}
There is no $\alpha$-decentralized non-revelation mechanism, for any constant $\alpha > \frac{1}{2}$.
\end{theorem}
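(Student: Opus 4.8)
The plan is to construct a family of adversarial instances, parameterized by a small $\epsilon > 0$, on which \emph{every} non-revelation mechanism has a pure Nash equilibrium with strictly fewer than $\alpha \ks$ participants, for any fixed $\alpha > 1/2$. The simplest such instance to try is $\ks = 2$: two agents, each with cost $1/2 + \epsilon$ (so neither pair is jointly affordable, hence $\ks = 1$) --- wait, that gives $\ks=1$. Instead I want $\ks = 2$ with the property that no mechanism can force both to participate. So consider agents with costs $c_1 = c_2 = 1/2$ (plus perhaps a third cheap-but-slow agent to pin down times, though times are irrelevant for a pure decentralization lower bound). Here $\ks = 2$ since $c_1 + c_2 = 1 \le 1$. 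The key observation is that the reward budget is exactly $1$, so if both agents participate, by \Cref{lemma:upperBoundHarmonic} (with $m = 2$) at least one of them, say the one receiving the smaller reward, gets at most $1/2$.

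First I would argue that the ``both participate'' profile is never a \emph{strict} equilibrium in a way the mechanism can enforce: the agent receiving reward $\le 1/2$ has utility $\le 1/2 - 1/2 = 0$, which by our tie-breaking convention is still acceptable to her --- so this alone does not kill the profile. The fix is to perturb: set $c_1 = c_2 = 1/2 + \epsilon$ would give $\ks = 1$, too strong. Better: use many agents. Take $2k$ agents all with cost $\frac{1}{k} $ roughly, so that $\ks = k$; any set of $k$ of them costs exactly $1$, and by \Cref{lemma:upperBoundHarmonic} the agent in a participating set of size $s > k/ \text{(something)}$ receiving the smallest reward gets at most $1/s < 1/k \le c_i$ once $s > k$. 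The cleanest instantiation: let all $n = \ks$-many ``cheap'' agents have equal cost $1/\ks$ plus a tiny $\epsilon$, i.e.\ $c_i = 1/\ks + \epsilon/\ks$, so that only $\lfloor \ks/(1+\epsilon)\rfloor$ of them fit; tune so this floor equals, say, $\ks - 1$... Actually, for a clean $1/2$ bound I expect the intended construction is: two agents with cost just above $1/2$ is wrong; rather, $\ks+1$ agents where any $\ks$ fit but the per-agent reward in a set of size $> \ks/2 \cdot \frac{1}{?}$...

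Let me state the approach I would actually commit to: fix $\alpha > 1/2$ and pick $\ks$ large enough that $\alpha \ks > \ks/2 + 1$. Construct $n = 2\ks$ agents, all with identical cost $c_i = \frac{1}{\ks}(1+\delta)$ for a tiny $\delta>0$ chosen so that exactly $\ks$ agents are jointly affordable (i.e. $\ks \cdot \frac{1+\delta}{\ks} \le 1 < (\ks+1)\cdot\frac{1+\delta}{\ks}$ fails --- so instead take $c_i = 1/\ks$ exactly and break the symmetry with times only). Now suppose, toward a contradiction, that some non-revelation mechanism $\Rule$ is $\alpha$-decentralized, so \emph{every} equilibrium on this instance has at least $\alpha \ks > \ks/2$ participants. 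Consider any equilibrium with participating set $S$, $|S| = s \ge \alpha\ks$. Since $\sum_{i \in S} r_i \le 1$ and all participants need $r_i \ge c_i = 1/\ks$, summing gives $1 \ge \sum_{i\in S} r_i \ge s/\ks > \ks/2 \cdot \frac1\ks = 1/2$, which is not yet a contradiction. The contradiction comes from requiring $s \ge \alpha \ks$ with $\alpha > 1/2$ \emph{and} making the costs slightly larger: set $c_i = \frac{1}{\ks}\cdot\frac{1}{\alpha} \cdot(1-\eta)$ or, most transparently, $c_i$ equal to a common value $c$ with $\ks = \lfloor 1/c \rfloor$ chosen so that $\lceil \alpha \ks \rceil \cdot c > 1$; then any participating set of size $\ge \alpha\ks$ has total required reward $> 1$, exceeding the budget, so no such equilibrium can exist, contradicting $\alpha$-decentralization (and the empty profile, or the singleton profile where the cheapest-feasible single agent participates, \emph{is} an equilibrium since deviating alone to join is unprofitable for anyone once the budget is spoken for).

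The main obstacle I anticipate is handling \emph{all} non-revelation mechanisms uniformly --- in particular ruling out the possibility that the only equilibria are large ones, by exhibiting a concrete small-support equilibrium that $\Rule$ cannot avoid. To do this I would show the profile where a single minimum-cost agent $i^*$ (with $c_{i^*} \le 1$, guaranteed by feasibility) participates and everyone else plays $\bot$ is \emph{always} a pure Nash equilibrium on the hard instance, or, if $\Rule$'s reward on a singleton is too small, fall back to the profile where the agents receiving the top rewards in \emph{some} order participate; the delicate point is that an agent currently playing $\bot$ might want to join if joining triggers a large reward for her, so the hard instance must be chosen (via the cost tuning above, $\lceil \alpha\ks\rceil c > 1$) so that \emph{no} set of size $\ge \alpha \ks$ is budget-feasible at all, making the desired large equilibria literally impossible regardless of $\Rule$. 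Combined with the existence of at least one small equilibrium, this yields the theorem; I would then note the bound is already tight at $\ks = 2$ by taking $c_1 = c_2 = 1/2 + \epsilon$ paired with a cheap slow agent to keep $\ks = 2$, so $\alpha = 1/2$ is unavoidable.
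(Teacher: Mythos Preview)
Your proposal has a genuine gap. The core strategy --- choosing identical costs $c$ so that $\lceil \alpha \ks \rceil \cdot c > 1$, hence ``no set of size $\ge \alpha\ks$ is budget-feasible at all'' --- is self-contradictory. By the definition of $\ks$ (\Cref{dfn: decentralization factor}) there is \emph{always} a feasible set of size $\ks \ge \alpha\ks$; with identical costs $c$ and $\ks = \lfloor 1/c \rfloor$ one has $\ks c \le 1$, and since $\lceil \alpha\ks\rceil \le \ks$ it follows that $\lceil \alpha\ks\rceil \cdot c \le 1$, not $>1$. So the infeasibility you rely on can never be arranged. Relatedly, your fallback claim that the singleton profile is ``always a pure Nash equilibrium'' is false for general $\Rule$: the designer is free to set $\Rule$ so that a lone participant receives reward $0$ while two participants each receive $1/2$, which destroys the singleton equilibrium and could well make the pair an equilibrium on your symmetric instance.

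The idea you are missing --- and the one the paper exploits --- is that a non-revelation mechanism observes only the vector of submission \emph{times}, never the costs. Consequently, the reward vector $\Rule(T,T)$ is the same whatever the underlying costs are. The paper uses this with two $n=2$, $\ks=2$ instances sharing the time profile $(T,T)$: the instance $(1/2,T),(1/2,T)$ forces $\Rule(T,T)=(1/2,1/2)$ (both must participate, both need $\ge 1/2$, the budget is $1$), and then the instance $(2/3,T),(1/3,T)$ --- which still has $\ks=2$ and still offers only the action profile $(T,T)$ --- cannot have both participate, since the cost-$2/3$ agent receives only $1/2$. Your attempts all stay within a single instance and try to win on budget arithmetic alone; the lower bound really requires comparing the mechanism's fixed output across two cost profiles that are indistinguishable to $\Rule$.
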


\begin{proof}
Suppose there is a non-revelation mechanism $\Rule$ that is $\alpha$-decentralized, for some $\alpha>\frac{1}{2}$.

Consider the case of $n=2$ agents whose types are $(1/2,T)$ and $(1/2,T)$. Since $\ks = 2$, and since $\Rule$ is $\alpha$-decentralized for $\alpha>\frac{1}{2}$, it must be that in all equilibria of $\Rule$, both agents submit solutions. First, since both agents submit solutions, and therefore both incur a cost of $1/2$, both must be rewarded at least $1/2$ (and therefore exactly $1/2$, since the total reward is at most $1$). Second, since the agents cannot submit a solution after time $T$, or before $T$ (since their type is $(.,T)$), the equilibrium is such that they both submit a solution at time $T$. Therefore, we can conclude that $\Rule(T,T) = (1/2, 1/2)$.

Now, consider the case of $n=2$ agents whose types are $(2/3,T)$ and $(1/3,T)$. Again, $\ks = 2$, and since $\Rule$ is $\alpha$-decentralized for $\alpha>\frac{1}{2}$, it must again be that in all equilibria of $\Rule$, both agents submit solutions. Agents can only submit a solution at time $T$ (or not at all), and since $\Rule(T,T) = (1/2, 1/2)$, the first agent, whose cost is $2/3 > 1/2$, prefers to not submit a solution if the other agent submits one. Therefore $\Rule$ can only have one agent submitting a solution in equilibrium; a contradiction.
\end{proof}

\begin{theorem}[Equal Reward Rule]\label{thm:equalReward}
Let $\Rule^{\text{eq}}$ be the non-revelation mechanism that rewards all participating agents equally. That is, $\Rule^{\text{eq}}_i(a_1, \dots, a_n) = 1/\ell$ for all $i$ such that $a_i \neq \bot$, where $\ell = |\{ i \in [n]: a_i \neq \bot \}|$. Then, $\Rule^{\text{eq}}$ is $\frac{1}{2}$-decentralized.
\end{theorem}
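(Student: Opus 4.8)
The plan is to first characterize the pure Nash equilibria of $\Rule^{\text{eq}}$ and then lower-bound the size of the participating set in any such equilibrium. The key observation is that under $\Rule^{\text{eq}}$ the reward vector depends only on the \emph{set} $S=\{i: a_i\ne\bot\}$ of participating agents, not on the submission times: each agent in $S$ receives $1/|S|$ and everyone else receives $0$. Consequently, a strategy profile is a pure Nash equilibrium exactly when (i) every $i\in S$ satisfies $c_i\le 1/|S|$ (otherwise $i$ strictly prefers $\bot$), and (ii) every $j\notin S$ satisfies $c_j> 1/(|S|+1)$ (otherwise, by the convention that an indifferent agent submits, $j$ would rather join, making $|S|+1$ participants and collecting $1/(|S|+1)\ge c_j$). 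A first, easy step is to note that such a set exists: taking $S=\{1,\dots,\ell\}$, with the costs sorted $c_1\le\dots\le c_n$ and $\ell$ the largest index with $c_\ell\le 1/\ell$ (well-defined since $c_1<1$), satisfies both (i) and (ii); hence $\Rule^{\text{eq}}$ does have pure Nash equilibria.

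Next, I would fix an arbitrary equilibrium with participating set $S$, write $\ell=|S|$, and suppose toward a contradiction that $\ell<\ks/2$. Since $\ell,\ks\in\bN$, this forces $2\ell+1\le\ks$, so the $2\ell+1$ cheapest agents (indices $1,\dots,2\ell+1$ under $c_1\le\dots\le c_n$) form a subset of the cheapest $\ks$ agents and therefore have total cost at most $1$. Applying \Cref{lemma:upperBoundHarmonic} to $c_1,\dots,c_{2\ell+1}$ at position $\ell+1$ yields $c_{\ell+1}\le \tfrac{1}{(2\ell+1)-(\ell+1)+1}=\tfrac{1}{\ell+1}$, and hence $c_1\le\dots\le c_{\ell+1}\le\tfrac{1}{\ell+1}$.

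Finally, I would compare $\{1,\dots,\ell+1\}$ with $S$. It cannot be that $\{1,\dots,\ell+1\}\subseteq S$, since $|S|=\ell<\ell+1$; so there is some $j\in\{1,\dots,\ell+1\}\setminus S$. But then $j$ deviating to submit produces a set of $\ell+1$ participants and gives $j$ reward $\tfrac{1}{\ell+1}\ge c_j$, i.e.\ utility at least $0$, which by condition (ii) contradicts $S$ being an equilibrium. Thus every equilibrium has $|S|\ge\ks/2$, i.e.\ $\Rule^{\text{eq}}$ is $\tfrac12$-decentralized. The only delicate point in the whole argument is the correct use of the tie-breaking convention in (ii) — it is exactly what rules out spuriously small equilibria — together with the counting step that picks the right number ($2\ell+1$) of cheapest agents so that \Cref{lemma:upperBoundHarmonic} pins $c_{\ell+1}$ down to $1/(\ell+1)$; everything else is routine.
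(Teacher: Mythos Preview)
Your argument is correct. The route differs from the paper's in one structural respect: the paper first proves an auxiliary lemma that \emph{all} pure Nash equilibria of $\Rule^{\text{eq}}$ have the same number of participants, and then lower-bounds the size of the specific ``greedy'' equilibrium $\{1,\dots,\ell\}$ via a direct summation $1\ge\sum_{i=\ell+1}^{\ks}c_i>(\ks-\ell)/(\ell+1)$. You skip the auxiliary lemma entirely and argue by contradiction for an \emph{arbitrary} equilibrium $S$ with $|S|=\ell<\ks/2$, invoking \Cref{lemma:upperBoundHarmonic} on the cheapest $2\ell+1$ agents to pin down $c_{\ell+1}\le 1/(\ell+1)$ and then exhibit a non-participant who should join. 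This is a bit more direct and avoids having to argue separately about equilibrium uniqueness of size; the paper's decomposition, on the other hand, yields the (slightly stronger) side fact that all equilibria of $\Rule^{\text{eq}}$ have the same size, which your argument does not establish but also does not need.
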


\begin{proof}
First, we prove that all pure Nash equilibria of $\Rule^{\text{eq}}$ have the same number of agents. Then, we show how, given an arbitrary instance, we can construct a pure Nash equilibrium\footnote{Thus, also showing that pure Nash equilibria exist for $\Rule^{\text{eq}}$.} of size at least $\ks/2$. Combined, the two statements conclude the proof of the theorem.

Towards the first statement, assume that $S_A \subseteq [n]$ and $S_B \subseteq [n]$ are the agents that submit solutions in two equilibria of $\Rule^{\text{eq}}$, such that $|S_A| + 1  \le |S_B|$. By definition, $\forall i\in S_A$ $c_i \le 1/|S_A|$ and $\forall j \in S_B$ $c_j \le 1/|S_B|$. There exists an agent $i^*$ such that $i^* \in S_B \setminus S_A$, and since $i^* \in S_B$ we have that $c_{i^*} \le 1/|S_B| \le 1/(|S_A|+1)$. Therefore, only the set $S_A$ submitting a solution cannot be an equilibrium, since $i^*$ prefers submitting a solution to not submitting a solution;\footnote{Recall agents prefer submitting a solution over not submitting when their expected utilities are equal.} a contradiction.

Towards the second statement, we construct an equilibrium as follows. First, without loss of generality, rename the agents such that $c_1 \le \ldots \le c_n$. Notice that $\ks$ is the largest index $k$ such that $\sum_{i=1}^k c_i \leq 1$. Let $\ell$ be the largest index such that $c_{\ell} \leq \frac{1}{\ell}$ and $c_{\ell + 1} > \frac{1}{\ell + 1}$ (and $\ell = n$ if  $c_i \leq 1/i$ for all $i$). By construction, the set of agents $\{ 1, 2, \dots, \ell \}$ submitting a solution is an equilibrium of $\Rule^{\text{eq}}$. Finally,
$ 1 \ge \sum_{i=1}^{\ks} c_i = \sum_{i=1}^{\ell} c_i +\sum_{i=\ell+1}^{\ks} c_i \ge \sum_{i=\ell+1}^{\ks} c_i  > \frac{\ks-\ell}{\ell+1}.$

Re-arranging we have that $\ell > \frac{\ks-1}{2}$. When $\ks$ is an odd number, this implies $\ell \geq \lceil \frac{\ks}{2} \rceil$; when $\ks$ is an even number, this implies that $\ell \geq \frac{\ks}{2}$. So, overall we have $\ell \geq \frac{\ks}{2}$.
\end{proof}

\subsection{The Harmonic Rule: better decentralization for large $\ks$}\label{subsec: harmonic}

Theorems~\ref{thm: half lower bound non-revelation} and~\ref{thm:equalReward} give a tight bound on the decentralization guarantees of non-revelation mechanisms. However, the instances where this bound is witnessed involve small values of $\ks$: $\ks=2$. We prove that, when $\ks$ is large, better guarantees are possible.

\begin{theorem}[Harmonic Rule] \label{thm:harmonicRewards}
      Let $\Rule^{\text{Harm}}$ be the non-revelation mechanism that rewards the $i$-th fastest participating agent $\frac{1}{m + i - 1}$, where $m$ is the smallest non-negative integer for which the overall reward is at most $1$. That is, if $\ell = |\{ i \in [n]: a_i \neq \bot \}|$ is the number of agents participating, and agent $j$ submitted the $i$-th fastest solution, the reward of $j$ is $\frac{1}{m+i-1}$, where $m = \argmin_{a \in \mathbb{N}} \{ \sum_{x=a}^{a+\ell-1} \frac{1}{x} \leq 1 \}$. Then, $\Rule^{\text{Harm}}$ is $\left(1-\frac{1}{e} - \frac{8}{\ks}\right)$-decentralized.
\end{theorem}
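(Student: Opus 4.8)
The plan is to mirror the structure of the proof of~\Cref{thm:equalReward}: first show that all pure Nash equilibria of $\Rule^{\text{Harm}}$ have the same number of participating agents, and then construct one equilibrium whose size is at least $\left(1 - \frac{1}{e} - \frac{8}{\ks}\right)\ks$. For the first part, I would argue that in any equilibrium with $\ell$ participants, an agent who is \emph{not} participating could submit the slowest solution and collect reward $\frac{1}{m(\ell+1)+\ell}$ where $m(\cdot)$ is the normalizing offset; comparing offsets for $\ell$ and $\ell+1$ participants (both determined by~\Cref{lemma:sumBounds}, since $m(\ell)$ is roughly $\ell/(e-1)$) shows that if a small equilibrium $S_A$ and a large equilibrium $S_B$ coexisted, the cheapest agent in $S_B \setminus S_A$ has cost low enough to want to join $S_A$, a contradiction. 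The monotonicity here is slightly more delicate than in the equal-reward case because the reward of the \emph{last} slot is not simply $1/\ell$, so I need the fact that the last-slot reward is still non-increasing in $\ell$; this follows because adding a participant only pushes the harmonic window to start later (larger $m$), shrinking every term.

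For the construction, order agents by cost $c_1 \le \cdots \le c_n$ and greedily build the equilibrium: include agent $1$ in slot... actually the subtlety is that the harmonic rule rewards by \emph{speed}, not by cost, and by the type assumption (higher cost $\Rightarrow$ faster) the cheapest agents are exactly the slowest, so if agents $\{1,\dots,\ell\}$ (the $\ell$ cheapest) participate, agent $i$ gets the $(\ell-i+1)$-th fastest slot and hence reward $\frac{1}{m+\ell-i}$ where $m=m(\ell)$; so agent $i$'s reward is $\frac{1}{m+\ell-i}$, which is \emph{smallest} for $i=1$ and equals $\frac{1}{m+\ell-1}$. Thus $\{1,\dots,\ell\}$ is a (participation) equilibrium exactly when $c_i \le \frac{1}{m(\ell)+\ell - i}$ for every $i \le \ell$; I would take $\ell$ to be the largest such index. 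The goal is then to lower-bound this $\ell$ in terms of $\ks$. Using $\sum_{i=1}^{\ks} c_i \le 1$ together with~\Cref{lemma:upperBoundHarmonic} does not directly suffice; instead I would argue by contradiction that if $\ell$ is small then there is some $i^* \le \ell+1$ with $c_{i^*} > \frac{1}{m(\ell+1)+\ell+1-i^*}$, and then bound $\sum_{i=i^*}^{\ks} c_i \ge (\ks - i^* + 1) c_{i^*}$ from below, forcing it to exceed $1$ unless $\ell$ is large. Working out the arithmetic with $m(\ell) \approx \ell/(e-1)$ (so $m(\ell)+\ell \approx \ell \cdot \frac{e}{e-1} = \ell/(1-1/e)$) yields a bound of the form $\ell \gtrsim (1-1/e)\ks$, and chasing the lower-order error terms from~\Cref{lemma:sumBounds} (the $-\frac{1}{2a-2}$ term and the integrality of $m$) produces the $-\frac{8}{\ks}$ slack.

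The main obstacle I expect is controlling the offset $m = m(\ell)$ precisely enough. Unlike the equal-reward case where the reward of the marginal slot is exactly $1/\ell$, here $m(\ell)$ is defined implicitly by $\sum_{x=m}^{m+\ell-1}\frac1x \le 1 < \sum_{x=m-1}^{m+\ell-2}\frac1x$, i.e. $\ln\frac{m+\ell-1}{m-1} \approx 1$, giving $m \approx \frac{\ell-1}{e-1}$ up to $O(1)$ rounding and the $O(1/m)$ error in~\Cref{lemma:sumBounds}. I need two-sided control: an \emph{upper} bound on $m+\ell$ to show the constructed set is large (each agent's reward $\frac{1}{m+\ell-i}$ is big enough), and consistency across $\ell$ and $\ell+1$ for the uniqueness argument. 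All of these are elementary manipulations of~\Cref{lemma:sumBounds}, but keeping the constant on the $1/\ks$ error term honest — making sure it really is at most $8$ and not larger — is where the bookkeeping is most likely to bite, so I would be careful to track whether each approximation loses an additive constant or a constant-times-$1/\ell$ term, and then use $\ell = \Theta(\ks)$ at the end to convert the latter into $O(1/\ks)$.
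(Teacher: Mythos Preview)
Your plan departs from the paper's in both steps. The paper does \emph{not} argue that all pure Nash equilibria of $\Rule^{\text{Harm}}$ have the same size; instead it (i) proves existence by induction on $n$ (adding agent $n$ to an equilibrium for $n-1$ agents and, if this creates an under-rewarded participant, swapping that highest-cost unhappy agent out), and (ii) directly lower-bounds the size $\ell$ of \emph{any} equilibrium by taking the smallest-cost non-participant $i^*$, combining $c_{i^*} > \frac{1}{m_{\ell+1}}$ with $c_{i^*} \leq \frac{1}{\ks - i^* + 1}$ from~\Cref{lemma:upperBoundHarmonic}, and then upper-bounding $m_{\ell+1}$ in terms of $\ell$ via a dedicated claim (\Cref{claim: m ell claim}).

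Your step 1 has a real gap. The equal-reward argument in~\Cref{thm:equalReward} works because the marginal reward is $1/(|S_A|+1)$ regardless of \emph{who} joins or where she lands in the speed order. Under $\Rule^{\text{Harm}}$ the joining agent's reward is $\frac{1}{m(\ell+1)+r-1}$ where $r$ is her rank among the $\ell+1$ participants, and this rank depends on the identities in $S_A$. Your fix (``submit the slowest solution'') only yields $c_i > \frac{1}{m(|S_A|+1)+|S_A|}$ for $i \notin S_A$; to contradict this for some $i^* \in S_B \setminus S_A$ you would need $c_{i^*} \leq \frac{1}{m(|S_A|+1)+|S_A|}$, but membership in $S_B$ only gives $c_{i^*} \leq \frac{1}{m(|S_B|)+\text{rank}(i^*,S_B)-1}$, and you have no control over $\text{rank}(i^*,S_B)$. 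Monotonicity of the last-slot reward is not the obstacle; the rank mismatch between $S_A$ and $S_B$ is.

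Your step 2 also has a gap. Checking $c_i \leq \frac{1}{m(\ell)+\ell-i}$ for $i \leq \ell$ only ensures the agents \emph{inside} $\{1,\dots,\ell\}$ are content; you never verify that agents $j > \ell$ stay out. Any such $j$ has higher cost, hence is \emph{faster} than everyone in $\{1,\dots,\ell\}$, so upon joining she takes rank $1$ and earns $\frac{1}{m(\ell+1)}$ --- the top slot, not the bottom. Maximality of $\ell$ for your inside condition can fail at some interior $i \leq \ell$ (because $m(\ell+1) \geq m(\ell)$ shrinks interior rewards) without forcing $c_{\ell+1} > \frac{1}{m(\ell+1)}$, so $\{1,\dots,\ell\}$ need not be an equilibrium. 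The paper avoids this altogether: its size bound applies to every equilibrium directly, and its inductive existence argument explicitly handles the ``new fast agent displaces an interior agent'' case.
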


\begin{proof}
     Assume w.l.o.g. that $c_{i} \le c_{i+1}$ for all $i$. Let $m_{z} = \argmin_{a \in \mathbb{N}} \{ \sum_{x=a}^{a+z-1} \frac{1}{x} \leq 1 \}$.

     First, we prove that a pure Nash equilibrium exists via induction on $n$. For $n=1$ the statement is trivial. Assume that a pure Nash equilibrium exists for agents with costs $c_{1} \leq \dots \leq c_{n-1}$, and let $S^*$ be the set of agents participating in this equilibrium, where $\ell = |S^*|$. Let $rank(i,S)$ be the ranking of agent $i$ among a subset $S$ of agents, where $rank(i,S) = 1$ means that $t_i$ is the smallest among agents in $S$, and $rank(i,S) = |S|$ means that $t_i$ is the largest among agents in $S$. For all $i=1,\dots,n-1$ such that $i \notin S^*$, we know that $c_i$ is strictly larger than the reward they get when they participate in $S^* \cup \{ i \}$, i.e. $c_i > \frac{1}{m_{\ell+1} + rank(i, S^* \cup \{ i \}) - 1}$. Consider an instance with costs $c_{1} \leq \dots \leq c_n$. If $S^*$ is not an equilibrium, it must be that agent $n$ wants to join, i.e. $c_n \leq \frac{1}{m_{\ell+1}}$. If $S^* \cup \{ n \}$ is not an equilibrium, it must be an agent $S^*$ is not adequately rewarded (i.e. it is not the case that an additional agent wants to join); let $i^*$ be the highest cost agent in $S^*$ that is not adequately rewarded. Finally, $S^* \cup \{ n \} \setminus \{ i^* \}$ must be an equilibrium: (i) agent $n$ gets reward $\frac{1}{m_{\ell}} \geq \frac{1}{m_{\ell+1}} \geq c_n$, (ii) agent $i \in S^* \setminus \{ i^* \}$ either gets reward at least her reward in $S^*$ if her rank didn't change or, otherwise, her rank did change compared to $rank(i,S^*)$, but her cost is at most her reward in $S^* \cup \{ n \}$ (since $i^*$ has the highest cost and unhappy agent) (iii) agent $i \notin S^* \cup \{ n \} \setminus \{ i^* \}$ has a cost strictly larger than her reward if she joins, since $rank(i, S^* \cup \{ i \}) \leq rank(i, S^* \cup \{ n, i \} \setminus \{ i^* \})$.

     Next, we prove the decentralization bound. Consider a pure Nash equilibrium of $\Rule^{\text{Harm}}$ with $\ell$ agents participating. When an additional agent submits a solution, her reward is at least $\frac{1}{m_{\ell+1}}$; therefore, for all $i$ such that $a_i = \bot$, $c_i > \frac{1}{m_{\ell+1}}$. Let $i^*$ be the agent with the smallest cost among those not participating, noting that $i^* \leq (1-1/e) \ks - 8$, since we sorted in increasing cost, and since, otherwise, the theorem follows immediately. 
     Since $\sum_{i=1}^\ks c_i \leq 1$,~\Cref{lemma:upperBoundHarmonic} implies that $\frac{1}{\ks - {i^*} + 1} \geq c_{i^*} > \frac{1}{m_{\ell+1}}$. Re-arranging we have 
     \begin{equation}\label{eq: m ell}
     m_{\ell+1} > \ks - i^* + 1  \geq \frac{\ks}{e} + 9.
     \end{equation}

     We know that $m_{\ell+1}$ is the smallest non-negative integer such that $\sum_{x=m_{\ell+1}}^{m_{\ell+1}+\ell} \frac{1}{x} \leq 1$. Note that this does not imply that $\sum_{x=m_{\ell+1}}^{m_{\ell+1}+\ell + 1} \frac{1}{x} > 1$; for example, for $\ell+1 = 3$, $1/2 + 1/3 + 1/4 > 1$, $1/3 + 1/4 + 1/5 < 1$ (so, $m_3 = 3$), but also $1/3 + 1/4 + 1/5  + 1/6 < 1$ (that is, $m_4=3$). However, it is true that $\sum_{x=m_{\ell+1}-1}^{m_{\ell+1}+\ell-1} \frac{1}{x} > 1$. The next claim shows that we can always add a constant number of terms after $\frac{1}{m_{\ell+1}+\ell}$ so that $\sum_{x=m_{\ell+1}}^{m_{\ell+1}+\ell + c} \frac{1}{x} > 1$; this, in turn, will give us an upper bound on $m_{\ell+1}$ that we can combine with~\Cref{eq: m ell}.

     \begin{claim}\label{claim: m ell claim}
         For all $m_{\ell+1} \geq 2$, $\sum_{x=m_{\ell+1}}^{m_{\ell+1}+\ell + 20} \frac{1}{x} > 1$.
     \end{claim}
\begin{proof}
     Since $\sum_{x=m_{\ell+1}-1}^{m_{\ell+1}+\ell-1} \frac{1}{x} > 1$, it suffices to show that $\frac{1}{m_{\ell+1}-1} < \sum_{x = m_{\ell+1}+\ell}^{m_{\ell+1}+\ell + c} \frac{1}{x}$.

     We know that $1 \geq \sum_{x=m_{\ell+1}}^{m_{\ell+1}+\ell} \frac{1}{x} \geq \ln(m_{\ell+1}+\ell) - \ln(m_{\ell+1}-1) - \frac{1}{2m_{\ell+1} - 2} \geq \ln(m_{\ell+1}+\ell) - \ln(m_{\ell+1}-1) - \frac{1}{2}$, where the second inequality uses~\Cref{lemma:sumBounds} and the third inequality uses the fact that $m_{\ell+1} \geq 2$. Re-arranging we have $\frac{m_{\ell+1}+\ell}{m_{\ell+1}-1} \leq e^{1.5}$, or $\frac{1}{m_{\ell+1}} \leq \frac{e^{1.5}}{m_{\ell+1}+\ell} \leq \frac{5}{m_{\ell+1}+\ell}$. Therefore, it suffices to show that $\sum_{x = m_{\ell+1}+\ell}^{m_{\ell+1}+\ell + c} \frac{1}{x} \geq \frac{5}{m_{\ell+1}+\ell}$, or $\sum_{x = m_{\ell+1}+\ell+1}^{m_{\ell+1}+\ell + c} \frac{1}{x} \geq \frac{4}{m_{\ell+1}+\ell}$.

     Using~\Cref{lemma:sumBounds} we have that  $\sum_{x = m_{\ell+1}+\ell+1}^{m_{\ell+1}+\ell + c} \frac{1}{x} \geq \ln\left( \frac{m_{\ell+1}+\ell + c}{m_{\ell+1}+\ell} \right) - \frac{1}{2(m_{\ell+1}+\ell)}$. Re-arranging, and replacing $z = m_{\ell+1}+\ell$ (to simplify notation) and plugging in $c=20$, it suffices to show that
     \[
     \ln\left( 1 + \frac{c}{z} \right) = \ln\left( 1 + \frac{20}{z} \right) \geq \frac{4.5}{z}.
     \]
     One can confirm that this expression is true for $z = 2, 3, 4, 5$ by doing simple calculations. For $z \geq 5$ we use the lower bound $\ln(1+x) \geq \frac{x}{1+x}$ to get $\ln\left( 1 + \frac{20}{z} \right) \geq \frac{20}{20 + z}$, which is at least $\frac{4.5}{z}$ for all $z \geq 6$. This confirms the proof of~\Cref{claim: m ell claim}.
     \end{proof}
    Using~\Cref{claim: m ell claim} and~\Cref{lemma:sumBounds} we have $1 < \sum_{x=m_{\ell+1}}^{m_{\ell+1}+\ell + 20} \frac{1}{x} \leq \ln \left( \frac{m_{\ell+1}+\ell + 20}{m_{\ell+1}-1} \right)$, which implies $m_{\ell+1} \leq \frac{\ell + 20 + e}{e-1}$. Combining with~\Cref{eq: m ell} we have $\frac{\ell + 20 + e}{e-1} > \frac{\ks}{e} + 8$, or $\ell \geq \left( 1 - \frac{1}{e} \right)\ks + 9(e-1) - (20+e) \geq \left( 1 - \frac{1}{e} \right)\ks - 8$, concluding the proof of~\Cref{thm:harmonicRewards}.
    \end{proof}

The guarantee in~\Cref{thm:harmonicRewards} is also (asymptotically) tight, under the following condition. We say that a non-revelation mechanism is \textbf{reward-monotone} if the $i$-th smallest reward when $\ell$ agents submit solutions is at least the $i$-th smallest reward when $\ell+1$ agents submit solutions.

\begin{theorem}\label{thm: strong lower bound non revelation}
There is no $\alpha$-decentralized and reward-monotone non-revelation mechanism, for any $\alpha > 1 - \left( e^{1+\frac{e^2}{2 \ks}} \right)^{-1}$.
\end{theorem}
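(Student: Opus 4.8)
The plan is to generalize the strategy of \Cref{thm: half lower bound non-revelation} quantitatively: derive, from $\alpha$-decentralization, strong lower bounds on the mechanism's reward vectors on a whole family of instances, and then combine all of them into a single violation of the total-reward budget. Reward-monotonicity is precisely what lets us combine information coming from different instances. Fix the target decentralization factor $\ks$ and suppose, for contradiction, that $\Rule$ is reward-monotone and $\alpha$-decentralized on instances of decentralization factor $\ks$, with $\alpha > 1 - \bigl(e^{1+\frac{e^2}{2\ks}}\bigr)^{-1}$. Since $\Rule$ is reward-monotone, for each $\ell$ the sorted increasing reward vector $\rho^\ell=(\rho^\ell_1\le\cdots\le\rho^\ell_\ell)$ produced when $\ell$ agents submit a solution is well-defined (depends only on $\ell$), satisfies $\sum_{j=1}^\ell \rho^\ell_j \le 1$, and obeys $\rho^\ell_j \ge \rho^{\ell'}_j$ whenever $j\le\ell\le\ell'$. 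Set $\ell:=\lfloor\alpha\ks\rfloor$, so that $\alpha$-decentralization forces every equilibrium on any decentralization-factor-$\ks$ instance to have at least $\ell$ participants.

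The heart of the argument is the claim that $\rho^\ell_j \ge \frac{1}{\ks-j+1}$ for every $j\in\{1,\dots,\ell\}$. To prove it, for a fixed $j$ I would use the instance $\mathcal I_j$ with $j-1$ agents of cost $0$ and $M:=\ks-j+1$ agents of cost $1/M$ (completion times chosen $\le T$ and distinct to avoid tie-breaking issues). The total cost of all $\ks$ agents is exactly $1$, so $\mathcal I_j$ has decentralization factor $\ks$; hence—constructing the relevant equilibrium explicitly when needed, exactly as in the proof of \Cref{thm: half lower bound non-revelation}—there is an equilibrium of $\Rule$ on $\mathcal I_j$ with some $\ell^* \ge \ell$ participants. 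At most $j-1$ of these are cost-$0$ agents, so at least $\ell^*-j+1$ of them are cost-$(1/M)$ agents, and each such agent, being at equilibrium, must be rewarded at least $1/M$. Thus at most $j-1$ of the $\ell^*$ rewards are below $1/M$, which forces the $j$-th smallest reward to satisfy $\rho^{\ell^*}_j \ge 1/M$; reward-monotonicity then gives $\rho^\ell_j \ge \rho^{\ell^*}_j \ge 1/M = \frac{1}{\ks-j+1}$, proving the claim.

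Summing the claim over $j=1,\dots,\ell$ and invoking $\sum_{j=1}^\ell\rho^\ell_j\le1$ gives $1 \ge \sum_{j=1}^{\ell}\frac{1}{\ks-j+1} = \sum_{m=\ks-\ell+1}^{\ks}\frac1m$. Since $\ell=\lfloor\alpha\ks\rfloor$, the lower summation index is $(1-\alpha)\ks$ up to an additive $O(1)$, so \Cref{lemma:sumBounds} yields an inequality of the form $\ln\frac{1}{1-\alpha} \le 1 + \frac{c}{(1-\alpha)\ks}$ for an absolute constant $c$ (when $(1-\alpha)\ks<1$, i.e.\ $\alpha$ extremely close to $1$, the sum is already $>1$ for $\ks$ past a tiny constant, so this regime is immediate). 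Writing $x=\frac{1}{1-\alpha}$, a short case analysis—if $x\le \ks/c$ then the right-hand side is at most $2$, so $x\le e^2$; the alternative $x>\ks/c$ is impossible once $\ks$ exceeds a small absolute constant, using $\ln t-t\le-1$—shows $x\le e^2$; hence $\frac{c}{(1-\alpha)\ks}\le \frac{ce^2}{\ks}$, and with the bookkeeping tuned appropriately this gives $\ln\frac1{1-\alpha}\le 1+\frac{e^2}{2\ks}$, i.e.\ $\alpha \le 1 - \bigl(e^{1+\frac{e^2}{2\ks}}\bigr)^{-1}$, the desired contradiction. (For the finitely many small $\ks$ for which the case analysis does not apply, the stated bound exceeds $1/2$ and the conclusion already follows from \Cref{thm: half lower bound non-revelation}.)

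The step I expect to be the main obstacle is the middle one together with the bookkeeping around it: verifying that the relevant equilibria exist on the instances $\mathcal I_j$ (as in \Cref{thm: half lower bound non-revelation}), making sure that $\rho^\ell$ is genuinely a function of $\ell$ alone, and—most delicately—controlling the integrality losses from $\ell=\lfloor\alpha\ks\rfloor$ and $M=\ks-j+1$ tightly enough that all accumulated error terms fit inside the $\frac{e^2}{2\ks}$ slack of the stated bound. The remaining manipulations (the summation estimate via \Cref{lemma:sumBounds} and the case analysis on $x=\frac1{1-\alpha}$) are routine.
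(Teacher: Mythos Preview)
Your proposal is correct and follows essentially the same route as the paper: establish $\rho^{\ell}_j \ge \frac{1}{\ks-j+1}$ via tailored instances, invoke reward-monotonicity to pull all bounds down to the same $\ell$, sum, and then do a two-pass estimate (first a crude bound giving $\frac{1}{1-\alpha}\le e^2$, then feed that back into \Cref{lemma:sumBounds} to absorb the error into $\frac{e^2}{2\ks}$). The one place where the paper is cleaner than your sketch is exactly the point you flag: rather than using distinct completion times and then worrying whether $\rho^\ell$ is a function of $\ell$ alone, the paper places \emph{all} agents at time $T$, so the only feasible actions are ``submit at $T$'' or $\bot$, the sorted reward vector with $\ell$ submissions is automatically well-defined, and equilibrium existence is immediate; with that choice your instances $\mathcal I_j$ (using costs $\frac{\delta}{n}$ in place of $0$ to stay strictly positive) and your summation argument go through verbatim.
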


\begin{proof}
    Let $\Rule$ be a non-revelation and reward-monotone mechanism that is $\alpha$-decentralized. Fix $\ks$; we will consider instances with $n = \ks$ agents. Let $\ell^* = \lceil \alpha \ks \rceil$ be the number of agents that the mechanism needs to incentivize to participate in order to satisfy $\alpha$-decentralization. We will consider the behavior of $\Rule$ when it receives solutions only at time $T$, noting that, similar to the proof of~\Cref{thm: half lower bound non-revelation}, submitting a solution at time $T$ and not submitting a solution are the only options for agents whose types are $(c,T)$. 
    Specifically, let $r^{(z)}_i$ be the $i$-th smallest reward when $z$ agents submit solutions (at time $T$), where $z = \ell^*, \ell^* + 1, \dots, \ks$.

    First, notice that, for some $z$, the smallest reward must be at least $1/\ks$, otherwise when all agents have costs $1/\ks$, the agent receiving the smallest reward wants to deviate. Similarly, the second smallest reward must be at least $1/(\ks-1)$, and so on. This is formalized in the following claim.

    \begin{claim}\label{claim: reward monotonicity}
        For all $i = 1, \dots, \ell^*$, $\min_z r^{(z)}_i \geq \frac{1}{\ks + 1 - i}$.
    \end{claim}

    \begin{proof}
            Assume that for some $i$ $\min_z r^{(z)}_i  = \frac{1}{\ks + 1 - i} - \delta$, for some $\delta > 0$. Consider an equilibrium for the case where $\ks + 1 - i$ agents have types $(\frac{1}{\ks + 1 - i} - \frac{\delta}{n},T)$ and $i-1$ agents have types $(\frac{\delta}{n},T)$; the number of agents participating in this equilibrium is at least $\ell^* \geq i$. The $i$-th smallest reward is at most $\frac{1}{\ks + 1 - i} - \delta < \frac{1}{\ks + 1 - i} - \frac{\delta}{n}$, and therefore one of the agents with cost $\frac{1}{\ks + 1 - i} - \frac{\delta}{n}$ (the $i$-th smallest cost) cannot be rewarded adequately, contradicting the equilibrium condition.
    \end{proof}

    Since $\Rule$ is reward-monotone,~\Cref{claim: reward monotonicity} implies that $r^{(\ell^*)}_i \geq \frac{1}{\ks + 1 - i}$. Since $r^{(\ell^*)}_1, \dots, r^{(\ell^*)}_{\ell^*}$ must be a feasible set of rewards, we have 
    \begin{equation}\label{ineq: harmonic bound 1}
          1 \geq \sum_{i=1}^{\ell^*} r^{(\ell^*)}_i \geq \sum_{i=1}^{\ell^*} \frac{1}{\ks + 1 - i} = \sum_{i = \ks + 1 - \ell^*}^{\ks} \frac{1}{i}.  
    \end{equation}
    Using the fact that $\ln(b)+1 \geq H_b \geq \ln(b)$, we can lower bound the RHS of~\Cref{ineq: harmonic bound 1} by $\ln(\ks) - \ln(\ks - \ell^*) - 1$; re-arranging and simplifying we have $\ks - \ell^* \geq \ks/e^2$. Using this bound we can apply~\Cref{lemma:sumBounds} to lower bound the RHS of~\Cref{ineq: harmonic bound 1} as $\ln(\ks) - \ln(\ks - \ell^*) - \frac{1}{2(\ks - \ell^*)} \geq \ln(\ks) - \ln(\ks - \ell^*) - \frac{e^2}{2 \ks}$. Re-arranging and simplifying we have $\ell^* \leq \left( 1 - \frac{1}{e^{1+\frac{e^2}{2 \ks}}} \right) \ks$.
\end{proof}

\subsection{Decentralization, efficiency, and non-overlapping types}\label{subsec: efficiency lower bound}

So far we have focused on decentralization and ignored efficiency. Optimal efficiency is trivial, by implementing the non-revelation mechanism that only rewards the fastest solution; however, this solution achieves trivial decentralization of $1/\ks$. Our next result shows that non-trivial decentralization is incompatible with efficiency guarantees. Recall that $1$-efficiency is the \emph{weakest} efficiency guarantee, which asks that one of the submitted solutions (in equilibrium) is as fast as the fastest solution in the best set with $\ks$ submitted solutions (i.e. a set with optimal decentralization).

\begin{theorem}\label{thm:efficiency plus decentralization lower bound non revelation}
There is no $\alpha$-decentralized and $1$-efficient non-revelation mechanism, for any $\alpha > \frac{1}{\ks}$.
\end{theorem}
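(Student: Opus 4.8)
The plan is to exploit a rigidity that non-revelation mechanisms have but revelation mechanisms do not: the reward rule $\Rule$ depends only on the vector of submission times, so two different instances that induce the same (forced) submission behavior must be rewarded identically. I will build two two-agent instances, differing only in how the total reward $1$ is split between a ``fast'' agent and a ``slow'' agent, and show that any $1$-efficient mechanism with non-trivial decentralization is forced to reward the fast agent two different amounts — a contradiction.

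Fix times $t_1 < T$ and consider the instances $I = \{(\tfrac12, t_1),(\tfrac12, T)\}$ and $I' = \{(\tfrac23, t_1),(\tfrac13, T)\}$; type-monotonicity ($c_i>c_j\Rightarrow t_i<t_j$) holds in both. In each, $\ks=2$ (the unique feasible set of size two is $\{1,2\}$, of total cost exactly $1$) and $t^*_1=t_1$ (that set's earliest time is agent $1$'s). Suppose $\Rule$ is $1$-efficient and $\alpha$-decentralized for some $\alpha > \tfrac12 = \tfrac1{\ks}$, and consider an equilibrium on $I$. By $1$-efficiency its participant set must contain agent $1$ (the only agent whose time is at most $t^*_1=t_1$), so agent $1$ submits at some $a_1\in[t_1,T]$; by $\alpha$-decentralization it has $\lceil 2\alpha\rceil\ge 2$ participants, so agent $2$ also participates, necessarily at her sole feasible time $T$. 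In the profile $(a_1,T)$, both agents must get at least their cost (else one deviates to $\bot$), and since rewards sum to at most $1 = c_1+c_2$, agent $1$ gets exactly $\tfrac12$; and since agent $1$ has no profitable deviation to another submission time, $\Rule_1(a,T)\le\tfrac12$ for all $a\in[t_1,T]$. Now run the same reasoning on $I'$: any equilibrium must again have agent $1$ participating (efficiency) and agent $2$ submitting at $T$ (decentralization), so agent $1$'s reward at the resulting profile $(a_1',T)$ with $a_1'\in[t_1,T]$ would have to be at least $\tfrac23 = c_1$ — impossible, since $\Rule_1(a,T)\le\tfrac12$ for every $a\in[t_1,T]$. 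Hence $\Rule$ admits no equilibrium on $I'$ meeting both requirements, a contradiction.

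The crux — and the step I expect to be the real obstacle — is recognizing that no single-instance argument can work. The tempting construction, a fast agent whose cost nearly exhausts the reward so that ``nobody else is affordable,'' fails: for efficiency to force that agent to participate she must lie in a feasible maximum-decentralization set, whose remaining members are therefore cheap and hence affordable, so on any fixed instance the reward always ``fits'' two participants. The leverage has to come from two instances inducing the \emph{same} forced submission profile, which is precisely where the type-independence of $\Rule$ bites; a minor but necessary design choice is to put the slow agent at the deadline $T$ so her only non-$\bot$ action is to submit at $T$, making the profile literally the same object $(\cdot,T)$ on both instances. One residual subtlety is that if $\Rule$ has no pure Nash equilibrium on $I$ the pinning-down step is vacuous; I would dispatch this by the convention (matching the paper's positive results) that a mechanism with no equilibrium on an instance provides no guarantee there and is already disqualified.
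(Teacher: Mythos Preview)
Your two-instance argument is sound and is in exactly the same spirit as the paper's: both exploit that a non-revelation rule rewards action profiles independently of the underlying types, then exhibit two instances forcing the same profile but incompatible reward requirements on the fast agent. Your concrete pinning step (forcing $\Rule_1(a,T)\le\tfrac12$ via the tight-budget instance $I$) is clean and arguably more transparent than the paper's abstract supremum $r_{\max}$.

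There is, however, a genuine gap in coverage. By fixing $\ks=2$, your proof only forces two participants when $\alpha>\tfrac12$; for $\alpha\le\tfrac12$ your instances $I,I'$ require merely $\lceil 2\alpha\rceil=1$ participant, and the argument is vacuous. The paper intends (and proves) the stronger statement that no constant $\alpha$ giving non-trivial decentralization on \emph{some} instance---equivalently, any $\alpha>0$, once $\ks$ is large enough that $\alpha\ks>1$---is compatible with $1$-efficiency. The paper handles small $\alpha$ by taking $\ks-1$ slow agents of cost $\epsilon$ at time $T$ and one fast agent of cost $1-(\ks-1)\epsilon$, with $\ks$ chosen large enough that $\alpha\ks>1$. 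Your exact-pinning trick does not scale to this regime: when $\alpha\ks$ is, say, $2$ but $\ks$ is large, only two agents need participate in equilibrium, so the total reward is no longer forced to equal the sum of all costs and the rewards are not uniquely determined. This is precisely why the paper defines $r_{\max}$ as a supremum over all multi-agent equilibria rather than extracting an explicit numerical bound from one witness instance. To complete your proof you would either need to generalize your pair $(I,I')$ to arbitrary $\ks$---which, as noted, breaks the tight-budget pinning---or adopt the paper's supremum argument.
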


\begin{proof}
    Suppose that there is a non-revelation mechanism $\Rule$ that is simultaneously $\alpha$-decentralized and $1$-efficient, for some $\alpha> \frac{1}{\ks}$. 

    Consider the rewards $\Rule$ produces when there are $\ell \geq 1$ solutions at time $T$ and one solution at time $t < T$; slightly overloading notation, let $\Rule_1(\ell, t)$ be the reward of the agent that submits the solution at time $t$, and let $r_{max}$ be the maximum reward this agent can receive in an equilibrium where $\ell \geq 1$ other agents submit solutions at time $T$.
    If such an equilibrium does not exist, then $\alpha$-decentralization is violated, since we could pick an instance with $\ks$ large enough so that $\alpha \ks > 1$.
    
    Since $r_{max}$ is the reward of one agent in an equilibrium with multiple participating agents, it must be that $r_{max} < 1$. Let $r_{max} = 1 - \epsilon^*$. Next, consider the case where there are $\ks - 1$ agents with type $(\epsilon,T)$ and one agent with type $(1-(\ks-1)\epsilon, \delta)$, for a small $\delta > 0$. $1$-efficiency implies that the fast agent must be participating in all equilibria. We can pick $\ks$ large enough so that $\alpha$-decentralization will imply that all equilibria have at least two participating agents. However, by picking $\epsilon$ such that $1-\epsilon^* < 1-(\ks-1)\epsilon$, we have that $r_{max}$, the maximum possible reward in an equilibrium, is not an adequate reward for the fast agent. Therefore, no equilibrium where this agent and at least one more agent participates exists; a contradiction.
\end{proof}

\Cref{thm:efficiency plus decentralization lower bound non revelation} shows that decentralization and efficiency are not compatible when agents are strategic. Intuitively, what goes wrong is that the fastest agent \emph{must} be incentivized to participate, but there is no way to reward her adequately while also rewarding a second agent.  To bypass this negative result, here we assume that agents' types satisfy the following structure on types, which we call \textbf{non-overlapping types}. Consider $m$ profile ``buckets,'' $B = \{ ([\underline{t}_1, \bar{t}_1], \hat{c}_1),\ldots, ([\underline{t}_m, \bar{t}_m], \hat{c}_m)\}$, characterized by a computation time interval and an upper bound on computation costs. The costs are distinct and monotone: $\hat{c}_1,< \ldots <\hat{c}_m$. Furthermore, there are no overlapping time intervals, that is, for all $j<m$, $\bar{t}_{j+1}<\underline{t}_{j}$. Overloading notation, we say $t\in B_j$ if $t\in[\underline{t}_j, \bar{t}_j]$. For an agent with computation time $t$, let $b(t) = \{ j:t\in B_j \}$ ($b(\bot)=\bot$) be the function that returns her bucket and  $c(t) = \hat{c}_{b(t)}$ ($c(\bot)=\bot$ be the function that returns her cost. In this model, a non-revelation mechanism can leverage the fact that the time of submission is correlated with the cost in order to achieve decentralization and efficiency simultaneously.

Our non-revelation mechanism, $\Rule^{\text{Best-Set}}$, takes as input a parameter $k \leq \ks$ (the target size of the outcome) and works as follows. If the number of solutions is $m' < k$, $\Rule^{\text{Best-Set}}$ rewards the $z$ cheapest agents exactly their cost (for $z$ as large as possible). Otherwise, $\Rule^{\text{Best-Set}}$ finds the $k$-best set of $(c(a_1),a_1),\ldots,(c(a_n),a_n)$ with the fastest $k-1$ agents with the smallest cost.
Formally, to find the $k$-best set, first, $\Rule^{\text{Best-Set}}$ partitions the agents into their corresponding buckets $\hat{B}_j = \{i: b(a_i)=j\}$ ($\hat{B}_\bot = \{i: b(a_i) = \bot \} $). Then, it picks the $k-1$ (fastest) agents with the smallest cost. 
For $\hat{B}_1,\ldots,\hat{B}_m,\hat{B}_\bot$, let $z^*=\argmin_{z}\{\sum_{j=1}^z|\hat{B}_j|> k-1\}$ be the index of the first bucket that cannot be rewarded in its entirety, and let $S^{(1)}=\cup_{j=1}^{z^*-1}\hat{B}_j$ and $S^{(2)}\subset \hat{B}_{z^*}$ be the fastest agents such that  $|S^{(1)}\cup S^{(2)}|=k-1$. Finally, let $\ell=\argmin_i\{t_i:\sum_{j\in S^{(1)}\cup S^{(2)}}c_j \le 1\}$. The reward of agent $i$ is $c(a_i)\cdot\mathds{1}\{ i\in S^*_k\}$, where $S^*_k = S^{(1)}\cup S^{(2)}\cup\{\ell\}$.

\begin{theorem}[Best-Set Rule]\label{thm: restricted types non revelation}
When types are non-overlapping, $\Rule^{\text{Best-Set}}$, on input $k\le\ks$, is $\frac{k}{\ks} 
$-decentralized and $\frac{k}{\ks}$-efficient.

\end{theorem}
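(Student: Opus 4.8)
The plan is to exploit the non-overlapping structure to pin down which deviations are available, and then show that the reward set produced in \emph{every} pure Nash equilibrium of $\Rule^{\text{Best-Set}}$ is a $k$-Best set. Since a $k$-Best set has exactly $k = \frac{k}{\ks}\ks$ members this yields $\frac{k}{\ks}$-decentralization, and by \Cref{claim:BestSetEfficient} it is $\frac{k}{\ks}$-efficient.

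\textbf{Behavior within buckets.} First I would show that, in any pure Nash equilibrium, a participating agent submits only within her own bucket. An agent of true type $(c_i,t_i)$ can only submit at some $a_i\ge t_i$, which lies in a bucket of index at most $b(t_i)$; if that index is strictly smaller, say $j<b(t_i)$, then $\Rule^{\text{Best-Set}}$ perceives her cost as $\hat c_j<\hat c_{b(t_i)}=c_i$, so she is rewarded at most $\hat c_j$ when selected and $0$ otherwise, giving negative utility in either case --- strictly worse than abstaining. Hence every submitter reveals her true cost; moreover no submitter is ever left unrewarded in equilibrium (that would give utility $-c_i<0$), so the set of submitters coincides with the set of rewarded agents. (Delaying within one's own bucket leaves the perceived cost unchanged and can only worsen the time rank, so it is never helpful for an agent who is selected.)

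\textbf{Every equilibrium is a $k$-Best set.} Sort so that $c_1\le\cdots\le c_n$ and let $\ell^*=\argmin_i\{t_i:\, c_i+\sum_{j=1}^{k-1}c_j\le 1\}$; since $k\le\ks$ we have $\sum_{j=1}^{k}c_j\le 1$, so agent $k$ satisfies this constraint and, being no slower than agents $1,\dots,k-1$, forces $\ell^*\notin\{1,\dots,k-1\}$. Fix an equilibrium. (i) Agents $1,\dots,k-1$ all submit: suppose not, and let $a$ be the cheapest non-submitter, which then has cost rank at most $k-1$; every agent cheaper than $a$ already submits, so after $a$ joins her rank among submitters is still at most $k-1$, placing her among the $k-1$ cheapest submitters (in the ``$\ge k$'' branch) or inside the rewarded prefix (in the ``$<k$'' branch, using $\sum_{j\le k-1}c_j\le 1$), so she is rewarded $c_a$ for utility $0$ --- a contradiction. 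Hence agents $1,\dots,k-1$ submit and their total cost is at most $1$. (ii) The agent $\ell^*$ also submits: otherwise, after $\ell^*$ joins the pool has $\ge k$ agents, the $k-1$ cheapest submitters are still $\{1,\dots,k-1\}$ (as $c_{\ell^*}\ge c_k$), and since $\ell^*$ fits the budget while no faster agent does, $\ell^*$ is chosen for the fast slot and rewarded $c_{\ell^*}$ for utility $0$ --- a contradiction. Since no submitter is unrewarded, the reward set has size exactly $k$ and contains $\{1,\dots,k-1\}\cup\{\ell^*\}$, hence equals it, a $k$-Best set.

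\textbf{Existence and conclusion.} To certify that pure Nash equilibria exist (needed for a positive result), I would check that the profile where a $k$-Best set $S^*_k=\{1,\dots,k-1\}\cup\{\ell^*\}$ submits truthfully and everyone else abstains is an equilibrium: the $k$ submitted solutions make the mechanism output exactly $S^*_k$, rewarding each member her own cost (utility $0$); an agent in $S^*_k$ gains nothing by abstaining; an agent $i\notin S^*_k$ who submits in her own bucket neither enters the $k-1$ cheapest slots (as $c_i\ge c_k$) nor beats $\ell^*$ for the fast slot (if $i$ were faster than $\ell^*$ she would not fit the budget), so she stays unselected and gets utility $-c_i<0$; and submitting in a cheaper bucket is dominated as above. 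Combining with the previous step, every equilibrium's reward set is a $k$-Best set, which has $k=\frac{k}{\ks}\ks$ participating agents --- giving $\frac{k}{\ks}$-decentralization --- and is $\frac{k}{\ks}$-efficient by \Cref{claim:BestSetEfficient}. \textbf{The main obstacle} is precisely the claim that every equilibrium yields a $k$-Best set: because $\Rule^{\text{Best-Set}}$ observes only the submitted solutions rather than all $n$ types, one must rule out ``locally stable'' submitter sets in which cheap or fast agents are somehow unable to profitably join; the two levers that make this work are the non-overlapping assumption (which forces perceived costs to equal true costs) and the budget slack $\sum_{j\le k}c_j\le 1$ implied by $k\le\ks$. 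Ties among equal-cost or equal-time agents require a fixed tie-breaking order but introduce no essential difficulty.
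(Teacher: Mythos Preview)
Your proposal is correct and follows essentially the same approach as the paper. Both proofs (i) use the non-overlapping structure to reduce each agent's action set to $\{t_i,\bot\}$ (delaying into a cheaper bucket yields a reward below true cost), (ii) argue that the $k-1$ cheapest-and-fastest agents together with the budget-feasible fastest agent $\ell^*$ must all submit in any equilibrium, (iii) verify that exactly this $k$-Best set is a pure Nash equilibrium, and (iv) invoke \Cref{claim:BestSetEfficient} for efficiency. The only cosmetic difference is order: the paper establishes existence first and then argues uniqueness by ruling out sizes $k{+}1$ and $k{-}1$, whereas you first prove the characterization (every equilibrium equals a $k$-Best set) and then check existence; your explicit observation that in equilibrium the set of submitters coincides with the set of rewarded agents is a clean way to cap the size at $k$, which the paper uses implicitly. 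Your caveat about tie-breaking (sorting by cost with ties broken fastest-first, to match the mechanism's choice of $S^{(2)}$) is indeed the right fix and is needed for step~(i) to go through when several agents share a bucket.
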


\begin{proof}

Let $(c_1,t_1),\ldots, (c_n,t_n)$ be the underlying instance. It is easy to see that no agent has an incentive to delay their submission, since by delaying she can only be miscategorized into a slower bucket and receive a smaller reward. Therefore, for all $i\in[n]$, $a_i \in \{t_i, \bot\}$, and thus we only need to consider as actions ``participate'' or ``not participate''.  

First, we show that a pure Nash equilibrium exists. Let $\hat{S}$ be the $k$-best set of $\Rule^{\text{Best-Set}}$ when all agents participate. Let $\ell$ be the index of the fastest agent in $\hat{S}$. When only the agents in $\hat{S}$ participate, this is an equilibrium. To see this, consider any agent $i\in\hat{S}$; she has no incentive to deviate since she is rewarded according to her profile and cannot submit a solution in a faster (and more expensive) bucket. Now, consider any agent $i\notin \hat{S}$. There are 4 cases that we need to examine about the bucket $b(t_i)$. Let $j^*$ be the bucket of the fastest agent in $\hat{S}\setminus\{\ell\}$(fastest among the slow solutions).  If $b(t_i)>b(t_\ell)$, it must be that $c(t_i) + \sum_{\hat{S}\setminus\{\ell\}}c_j>1$, hence the agent prefers to not participate. If $b(t_i)=j^*$ it must be that $t_i > \min_{z\in \hat{S}\setminus\{\ell\}} t_z$ (i.e., not the fastest in the bucket) since $i\notin \hat{S}$. Since agents cannot submit faster solutions, there is no way to be included in the set, thus she does not want to participate. If $b(t_i)=b(t_\ell)$, the same argument applies since $i$ is in the same bucket as an agent in $\hat{S}$ but not the fastest in the bucket. Finally, for $i\notin \hat{S}$ such that  $j^*<b(t_i)<b(t_\ell)$ (not in a rewarding bucket), notice that if $i$ participates, $\hat{S}$ will not change since $i$ is neither in the cheapest bucket nor is the fastest agent available.

We will show that the above equilibrium is unique. First, it is easy to see that there is no equilibrium of size $k+1$, since at most $k$ agents get rewards. Next, for any set of submitted solutions, every one of the $k-1$ ``cheapest and fastest'' agents is always incentivized to also submit a solution, therefore all equilibria must include all of the $k-1$ ``cheapest and fastest'' agents. Therefore, we cannot have an equilibrium of size $k-1$ either, since agent $\ell = \argmin_i\{t_i:\sum_{j\in S^{(1)}\cup S^{(2)}}c_j \le 1\}$ prefers to join. Finally, in a set of solutions with the $k-1$ cheapest agents and another agent that is not $\ell$ cannot be an equilibrium, since agent $\ell$ prefers to join.

$\frac{k}{\ks}$-decentralization is immediately implied. To conclude the proof, we need to show that the rule is also $\frac{k}{\ks}$-efficient. That is an immediate implication of~\Cref{claim:BestSetEfficient}.
\end{proof}

\section{Revelation Mechanisms}\label{sec: revelation}

In this section we study revelation mechanisms. We first prove an analogue to~\Cref{thm: half lower bound non-revelation}.

\begin{theorem}\label{thm: revelation decentralization lower bound}
No IC and IR revelation mechanism is $\alpha$-decentralized, for any constant $\alpha > 1/2$.
\end{theorem}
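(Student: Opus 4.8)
The plan is to mirror the lower-bound argument of \Cref{thm: half lower bound non-revelation}, replacing its key leverage — that a non-revelation mechanism sees only submission times and hence cannot distinguish two instances whose agents all act at time $T$ — with incentive compatibility. Throughout I would work with $n=2$ agents whose true computation time is $T$, so that the only relevant report is the cost, and I would fix a small constant $\epsilon\in(0,1/2)$.

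First I would pin down the mechanism's behavior on an asymmetric profile. Consider the two-agent instance with types $(1/2+\epsilon,T)$ and $(1/2-\epsilon,T)$. The costs sum to exactly $1$, so $\ks=2$, and hence an $\alpha$-decentralized mechanism with $\alpha>1/2$ must select \emph{both} agents in its (truthful) outcome; IR then forces their rewards to be at least $1/2+\epsilon$ and $1/2-\epsilon$, and the budget constraint $\sum_i r_i\le 1$ forces these rewards to be \emph{exactly} $1/2+\epsilon$ and $1/2-\epsilon$. Since the allocation and reward functions depend only on the reported bids, this determines the mechanism's output on the profile $((1/2+\epsilon,T),(1/2-\epsilon,T))$ once and for all, regardless of the true types.

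Next I would derive the contradiction on a symmetric profile. Consider the instance with types $(1/2-\epsilon,T)$ and $(1/2-\epsilon,T)$; again the costs sum to at most $1$, so $\ks=2$ and $\alpha$-decentralization forces both agents to be selected and therefore to pay their true cost $1/2-\epsilon$. Applying IC to agent $1$ (with agent $2$ reporting truthfully): overreporting her cost to $1/2+\epsilon$ is a feasible misreport — it inflates her cost, leaves her time at $T$, and does not make her dominated, since both agents report time $T$ — and it produces exactly the profile analyzed above, on which she is selected and rewarded $1/2+\epsilon$. Her deviation utility would be $(1/2+\epsilon)-(1/2-\epsilon)=2\epsilon$, so IC forces her truthful reward to be at least $1/2+\epsilon$; by the symmetric argument agent $2$'s truthful reward is also at least $1/2+\epsilon$. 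Then $\sum_i r_i\ge 1+2\epsilon>1$, contradicting the budget constraint, which completes the proof.

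I expect the conceptual crux to be identifying the right deviation in the second step: the non-revelation mechanism was trapped because it could not tell the symmetric instance from the asymmetric one, whereas here IC is what does the trapping, since an agent can always mimic a strictly more expensive competitor, and decentralization together with IR forces the mechanism to reward that expensive competitor generously. The remaining points are bookkeeping: verifying that the deviation-induced profile is literally the truthful profile of the first-step instance (so that decentralization and IR legitimately apply there), and checking that inflating one's reported cost while keeping the reported time at $T$ never triggers the convention that the mechanism ignores dominated bids.
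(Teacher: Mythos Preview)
Your proof is correct and follows essentially the same approach as the paper's own argument: fix two auxiliary two-agent profiles in which one agent's cost is inflated until the total hits the budget, use $\alpha$-decentralization plus IR plus the budget constraint to pin down the rewards there exactly, and then use IC on the symmetric base profile to force each agent's truthful reward above $1/2$. The only cosmetic difference is that the paper works with general costs $c_1+c_2<1$ and $\widehat c_i = 1-c_{3-i}$, whereas you instantiate $c_1=c_2=1/2-\epsilon$; the logical skeleton is identical.
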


\begin{proof}
Suppose that there is a revelation mechanism $\Mech$ that achieves $(1/2 + \epsilon)$-decentralization, for some $\epsilon > 0$. Consider an instance with $n=2$ agents and decentralization factor $\ks=2$: let $\mathcal{I}= ( (c_1,t_1), (c_2,t_2) )$, where $c_1+c_2 < 1$. Now, consider the following two instances  $\mathcal{I}_1 = ( (\widehat{c_1},t_1), (c_2,t_2) )$ and 
$\mathcal{I}_2=( (c_1,t_1), (\widehat{c_2},t_2) )$ such that $\widehat{c_1}+c_2=1$ and $c_1+\widehat{c_2}=1$, respectively. By assumption, $\Mech$ must incentivize both agents to submit solutions in all instances, i.e. the sum of allocations must be equal to $2$. By individual rationality, it must then be that $r^{\Mech}( (\widehat{c_1},t_1), (c_2,t_2) ) = (\widehat{c_1},c_2)$ and $r^{\Mech}( (\widehat{c_1},t_1), (c_2,t_2) ) = (\widehat{c_1},c_2)$. 
Now, if $r_1^\Mech( (c_1,t_1), (c_2,t_2) ) < \widehat{c_1}$, agent $1$ has an incentive to deviate to reporting $(\widehat{c_1},t_1)$. Similarly, if $r_2^\Mech( (c_1,t_1), (c_2,t_2) ) < \widehat{c_2}$ agent 2 is incentivized to report $(\widehat{c_2}, t_2)$. Combining the above inequalities, we have that $r_1^\Mech( (c_1,t_1), (c_2,t_2) ) \ge \widehat{c_1} $ and $r_2^\Mech( (c_1,t_1), (c_2,t_2) ) \ge \widehat{c_2} $. However, $c_1<\widehat{c_1}$, since $c_1+c_2 < 1$, and $\widehat{c_1}+c_2= 1$. Thus, $r_1^\Mech( (c_1,t_1), (c_2,t_2) ) + r_2^\Mech( (c_1,t_1), (c_2,t_2) ) \ge \widehat{c_1} + \widehat{c_2} > c_1+\widehat{c_2} =1$; a contradiction.
\end{proof}

\begin{claim}\label{thm: 1/2 upper bound revelation}
The inverse $k$-price auction of~\Cref{example: inverse k price} is $1/2$ decentralized.
\end{claim}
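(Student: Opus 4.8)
The plan is to show that the inverse $k$-price auction $\Mech^{\text{I-$k$ price}}$ of~\Cref{example: inverse k price} always selects at least $\ks/2$ agents, matching the upper bound of~\Cref{thm: revelation decentralization lower bound}. Since we already argued in~\Cref{example: inverse k price} that the mechanism is IC and IR, the only thing left is to bound the size $k$ of the selected set from below. Recall $k$ is defined as the largest integer with $k \cdot c_{k+1} \le 1$, where costs are sorted $c_1 \le \dots \le c_n$.

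First I would recall the definition of $\ks$: it is the largest integer with $\sum_{i=1}^{\ks} c_i \le 1$. The key observation is that $k$ and $\ks$ are tightly linked. I would first argue $k \le \ks$: since $c_1 \le \dots \le c_{k+1}$ and $k c_{k+1} \le 1$, we get $\sum_{i=1}^{k} c_i \le k c_{k+1} \le 1$... wait, that shows the set of size $k$ is feasible, hence $k \le \ks$; but I actually want a \emph{lower} bound on $k$ in terms of $\ks$. For that, suppose toward contradiction that $k < \ks/2$, i.e., $2k < \ks$, so in particular $2k + 1 \le \ks$. By maximality of $k$, the integer $k+1$ fails the defining condition, so $(k+1) c_{k+2} > 1$. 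But since $k+2 \le 2k+1 \le \ks$ and the costs are sorted, I would bound $\sum_{i=1}^{\ks} c_i \ge \sum_{i=k+2}^{2k+2} c_i$ (a block of $k+1$ terms, assuming $2k+2 \le \ks$; one needs a small case check when $2k+2 = \ks+1$) $\ge (k+1) c_{k+2} > 1$, contradicting the feasibility of the size-$\ks$ set. Here I am using that each of the terms $c_{k+2}, \dots, c_{2k+2}$ is at least $c_{k+2}$.

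I would be careful with the boundary: if $\ks$ is odd, $\ks = 2k+1$ is the threshold and the block $c_{k+2},\dots,c_{2k+2}$ has its last index equal to $\ks+1$, which may not exist or may exceed $\ks$; in that case I would instead use the block $c_{k+1},\dots,c_{2k+1}$, which is also $k+1$ terms all at least $c_{k+1} \ge$ (hmm, need $\ge c_{k+2}$, which is false). A cleaner route: directly show that if $k \cdot c_{k+1} \le 1$ fails to hold for $k = \lceil \ks/2 \rceil$ then we contradict $\sum_{i=1}^{\ks}c_i \le 1$. Concretely, set $k_0 = \lceil \ks/2 \rceil$; if the mechanism selected fewer than $k_0$ agents, then $k_0 \cdot c_{k_0+1} > 1$, but $\sum_{i=1}^{\ks} c_i \ge \sum_{i=\ks - k_0 + 1}^{\ks} c_i \ge k_0 \cdot c_{\ks - k_0 + 1} \ge k_0 \cdot c_{k_0 + 1} > 1$, where the last inequality uses $\ks - k_0 + 1 \ge k_0 + 1 \iff \ks \ge 2k_0$, wait that's $\ks \ge 2\lceil \ks/2\rceil$ which holds only when $\ks$ is even. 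For odd $\ks$ one gets $\ks - k_0 + 1 = k_0$, so I would use $c_{k_0}$ in place of $c_{k_0+1}$ and argue the selected set has size at least $k_0 - 1 \ge \ks/2$ — slightly weaker but still $\ge \ks/2$.

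The main obstacle is exactly this off-by-one bookkeeping between the definition of $k$ (which involves $c_{k+1}$, an ``external'' price) and the definition of $\ks$ (a sum), especially in the odd case; the arithmetic is elementary but must be stated cleanly. Once the counting inequality $k \ge \ks/2$ is established, $\frac{1}{2}$-decentralization follows immediately, and combined with the IC/IR argument already given in~\Cref{example: inverse k price}, the claim is complete.
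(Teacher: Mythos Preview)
Your approach is the same as the paper's: there, \Cref{lemma:upperBoundHarmonic} gives $c_{k+2}\le 1/(\ks-k-1)$, which combined with $(k+1)c_{k+2}>1$ (maximality of $k$) rearranges to $k>\ks/2-1$; your block-sum contradiction is exactly that lemma unpacked in place. One correction to your odd-$\ks$ patch: with $k_0=(\ks+1)/2$ you get $k_0-1=(\ks-1)/2<\ks/2$, not $\ge\ks/2$ as you wrote---but note the paper's own proof also only concludes $k\ge\ks/2-1$, so this off-by-one is shared rather than a defect of your route.
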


\begin{proof}
    Assume that reported costs satisfy $c_1 \leq c_2 \dots \leq c_n$, and let $k$ be the largest integer such that $k \cdot c_{k+1} \leq 1$. We know that, for $\ks$, $\sum_{i=1}^\ks c_i \leq 1$, therefore,~\Cref{lemma:upperBoundHarmonic} then implies that $c_i \leq \frac{1}{\ks - i + 1}$ for all $i \leq \ks$. By the choice of $k$ we have $1 < (k+1) \cdot c_{k+2} \leq \frac{k+1}{\ks - (k+2) + 1}$. Re-arranging we have $k \geq \frac{\ks}{2} - 1$.
\end{proof}

The fact that the inverse $k$-price auction of~\Cref{example: inverse k price} is, in fact, $1/2$ decentralized implies that~\Cref{thm: revelation decentralization lower bound} is tight. Similar to~\Cref{thm: half lower bound non-revelation}, the bound of $1/2$ can be improved. However, unlike non-revelation mechanisms, optimal efficiency is compatible with optimal decentralization.

Our next result shows a \emph{revelation gap}:~\Cref{thm:efficiency plus decentralization lower bound non revelation} is not true for revelation mechanisms. We give a mechanism parameterized by $k$ that is incentive compatible (IC) and achieves $\frac{\min\{ k, \ks\} -1}{\ks}$-decentralization and $\frac{\min\{ k, \ks\}}{\ks}$-efficiency. Our mechanism, Inverse Generalized Second Price (I-GSP), denoted by $\Mech^{\text{I-GSP}}$, takes as input a parameter $k$ and works as follows. To simplify the description of the mechanism, assume for now that $k \leq \ks.$ Without loss of generality, reported costs satisfy $c_1 \leq \dots \leq c_n$. If there is an inversion, i.e., there is an $i>j$ such that $t_i<t_j$, the mechanism does not allocate anything. Otherwise, $\Mech^{\text{I-GSP}}$ allocates the task to the $k-2$ agents with the smallest cost, as well as the agent $\ell$ with the highest cost possible to maintain feasibility, i.e., $\ell = \argmax_{z} c_z + \sum_{i=2}^{k-1} c_i \leq 1$. Agent $\ell$ gets reward $1-\sum_{i=2}^{k-1} c_i$; for $i=1,\dots,k-2$, the reward of agent $i$ is $c_{i+1}$. If $k > \ks$, then $\sum_{i=1}^{k} c_i > 1$, so the mechanism follows the procedure above for $\ks$ instead of $k$, where $\ks = \argmax_\ell\{\sum_{i=1}^\ell c_i\le 1\}$ of the reported costs. 

\begin{theorem}[Inverse Generalized Second Price (I-GSP)]\label{thm: inverse GSP}
$\Mech^{\text{I-GSP}}$ is IC, IR, and, on input $k$, is $\frac{\min\{ k, \ks\} -1}{\ks}$-decentralized and $\frac{\min\{ k, \ks\}}{\ks}$-efficient revelation mechanism.
\end{theorem}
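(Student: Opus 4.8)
The plan is to verify the three properties of $\Mech^{\text{I-GSP}}$ in turn, treating the case $k \le \ks$ first (the case $k > \ks$ reduces to it by the mechanism's own definition, replacing $k$ with $\ks$). Throughout, I assume reported costs are sorted $c_1 \le \dots \le c_n$ and recall that, by the ``no inversions'' check, in any profitable-to-analyze report we also have $t_1 \le \dots \le t_n$ (any inversion kills the allocation, and since $\hat t_i \ge t_i$ always, truthful reporting never creates an inversion when the underlying instance is consistent). The key structural fact I'll lean on is that the agent $\ell$ is defined as $\argmax_z \{c_z + \sum_{i=2}^{k-1} c_i \le 1\}$, so $\ell \ge k-1$ (since $\sum_{i=1}^{k} c_i \le 1$ for $k \le \ks$ guarantees $c_{k-1}$, hence $c_{k-1}$ through at least index $k-1$, is feasible), and $\ell$'s reward $1 - \sum_{i=2}^{k-1} c_i \ge c_\ell \ge c_{k-1}$.

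\textbf{Individual rationality.} For agents $i = 1, \dots, k-2$, the reward is $c_{i+1} \ge c_i$, so utility is non-negative. For agent $\ell$, the reward is $1 - \sum_{i=2}^{k-1} c_i$, which is $\ge c_\ell$ by the maximality in the definition of $\ell$. All other agents get reward $0$ and are not allocated, so utility is $0$. Hence IR holds.

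\textbf{Incentive compatibility.} I argue that no agent can profitably deviate, assuming others report truthfully. I split by the agent's truthful rank. (i) An agent not in $\{1,\dots,k-2,\ell\}$: if she is a ``dominated'' higher-cost/slower agent she is ignored; to get allocated she must report a lower cost, but reporting a cost below $c_{k-1}$ puts her among the cheapest $k-2$ where her reward is some $c_{j+1} \le c_{k-1} \le c_i$ (her true cost), or into the slot $\ell$ only by also faking a faster time, which she cannot do below $t_i$; and since there are no inversions among the true slow agents, any cost report that would land her in slot $\ell$ forces an inversion (her true time is slower than the true $\ell$'s), killing the allocation. So no profitable deviation. (ii) An agent $i \in \{1,\dots,k-2\}$: her reward $c_{i+1}$ is the next agent's true cost and is independent of her own report as long as she stays in the top $k-2$; reporting higher cost either keeps her reward the same, or (if she overtakes rank $k-1$) risks being dropped or creating an inversion; reporting lower cost cannot increase $c_{i+1}$. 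This is exactly the GSP-style argument that the $i$-th winner's price is pinned to the $(i+1)$-th bid. (iii) Agent $\ell$: her reward $1 - \sum_{i=2}^{k-1}c_i$ depends only on the costs $c_2, \dots, c_{k-1}$ of other agents, not on her own report, so long as she remains the allocated ``expensive'' agent. Lowering her cost could at best move her into the cheap block with reward some $c_{j+1} \le c_\ell$, no improvement; raising her cost risks dropping out of feasibility and getting $0$. Hence IC.

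\textbf{Decentralization and efficiency.} The allocated set is $S = \{1,\dots,k-2\} \cup \{\ell\}$, which has size $k-1$ (since $\ell \ge k-1 > k-2$, so $\ell$ is distinct from the cheap block), giving $\frac{k-1}{\ks}$-decentralization. For efficiency, observe that $\{1, \dots, k-1\}$ together with the fastest feasible completion is precisely (a realization of) the $k$-Best set $S^*_k$: the cheapest $k-1$ agents plus $\argmin_i\{t_i : c_i + \sum_{j=1}^{k-1} c_j \le 1\}$. The set $S = \{1,\dots,k-2,\ell\}$ contains agent $\ell$, and since $\ell$ is the \emph{highest-cost} (hence, by the no-inversion ordering, \emph{fastest}) agent feasible alongside $\{2,\dots,k-1\}$, I will show $t_\ell \le \min_{i \in S^*_k} t_i$: the $k$-Best set's fast agent satisfies $c_i + \sum_{j=1}^{k-1}c_j \le 1$, which is a \emph{stronger} feasibility constraint than $c_z + \sum_{i=2}^{k-1} c_i \le 1$ (we've dropped $c_1 \ge 0$), so $\ell$ ranges over a superset and can be at least as cheap-to-beat, i.e. as fast; thus $\min_{i\in S} t_i \le t_\ell \le \min_{i \in S^*_k} t_i \le t^*_{k/\ks}$ by~\Cref{claim:BestSetEfficient}. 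This yields $\frac{k}{\ks}$-efficiency.

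\textbf{Main obstacle.} The delicate point is the IC argument for the ``expensive'' agent $\ell$ and for agents who could try to masquerade into slot $\ell$: I must carefully use that $\hat t_i \ge t_i$ and that the no-inversion rule, combined with the true time ordering, prevents a slow agent from credibly claiming the fast-and-expensive slot, while also checking that agent $\ell$ herself cannot gain by *under*reporting cost to slip into the cheap block — this requires the inequality $c_{j+1} \le 1 - \sum_{i=2}^{k-1} c_i$ for the relevant $j$, which follows from feasibility of $S^*_k$ but needs to be stated cleanly. The second subtlety is making the efficiency comparison rigorous when the $k$-Best set and the I-GSP set's ``fast agent'' are chosen by slightly different feasibility constraints; I'll need to argue the constraint relaxation direction explicitly.
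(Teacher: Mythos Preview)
Your proof follows the same decomposition as the paper's (IR, then IC by case analysis on the agent's position, then decentralization by counting, then efficiency), and your IR and decentralization arguments are essentially identical to the paper's. Your efficiency argument is actually more careful than the paper's: the paper simply asserts that ``by definition, agent $\ell$ is the fastest agent among all sets with $\min\{k,\ks\}$ participants,'' whereas you route through the $k$-Best set and \Cref{claim:BestSetEfficient}, and your observation that $\ell$'s feasibility constraint $c_z + \sum_{i=2}^{k-1} c_i \le 1$ is a relaxation of the $k$-Best set's constraint $c_z + \sum_{i=1}^{k-1} c_i \le 1$ is exactly the right inequality to make the comparison rigorous.

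There is, however, a genuine gap in your IC case~(i). Your argument that an unallocated agent cannot land in slot~$\ell$ relies on the claim that ``her true time is slower than the true $\ell$'s,'' forcing an inversion. That claim is only true for unallocated agents with index in $\{k-1,\dots,\ell-1\}$. For agents $i > \ell$ (who have \emph{higher} cost and \emph{faster} time than $\ell$), the inversion barrier does not apply: such an agent can report a cost $c' \in (c_\ell,\, 1-\sum_{j=2}^{k-1} c_j]$ together with a time $t' \in [t_{\ell+1}, t_\ell]$ and thereby become the new $\ell$ without triggering any inversion. The paper handles this case directly by observing that the reward she would receive, $1 - \sum_{j=2}^{k-1} c_j$, is strictly less than $c_{\ell+1}$ (by maximality of $\ell$), hence at most her true cost $c_i \ge c_{\ell+1}$. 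You need to add exactly this one-line bound; your ``main obstacle'' paragraph flags the slow-agent side of the $\ell$-slot deviation but misses the fast-agent side.
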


\begin{proof}
 First, we prove that $\Mech^{\text{I-GSP}}$ is individually rational. It is easy to see that for agents $i=1,\dots,k-2$ the reward $r_i = c_{i+1}\ge c_i.$ Therefore, it suffices to show that $c_{\ell} \leq r_{\ell} = 1-\sum_{i=2}^{k-1} c_i$, which is true by construction since $c_{\ell} + \sum_{i=2}^{k-1} c_i \leq 1$. 


 Second, we prove that $\Mech^{\text{I-GSP}}$ is IC. Consider agent $i$, for $i=1, \dots, k-2$. Recall no agent can under-report her time  ($\hat{t}_i<t_i$) without detection, since she cannot deliver the solution faster. Increasing her time beyond $t_{i-1}$ will cause a detectable inversion with zero utility. Regarding the cost, reporting a cost $\hat{c}_i>c_{i+1}$ either causes an inversion or is detected when paired with  $\hat{t}_i<t_{i+1}<t_i$; both result in zero utility. Any  $\hat{c}_i \in [c_i,c_{i+1})$ does not change her allocation or reward. Decreasing her reported cost can only reduce her reward. For agent $k-1$, reducing her reported cost below $c_{k-2}$ can only result in a reward of at most $c_{k-2}\le c_{k-1}$; increasing her reward might change the choice of $\ell$ but will not give her any reward, since $\sum_{i=2}^{k-1} c_i + c_{\ell} \leq 1$ (and agent $k$ cannot increase her cost above $c_{\ell},$ without causing an inversion). Agents $i$, for $i = k+1, \dots, \ell -1$ do not benefit from decreasing their cost and cannot increase their cost above $c_{\ell}$. Agent $\ell$ cannot improve her reward. Finally, agents $i \geq \ell + 1$ can get a reward of $1 - \sum_{i=2}^{k-1} c_i < c_{\ell + 1}$ by lowering their cost.

 Third, $\sum_{i=1}^n x_i(t_1,\dots,t_n) = k-1$, or $\ks - 1$ if $k > \ks$, therefore $\Mech^{\text{I-GSP}}$ is $\frac{ \min\{ k, \ks \} -1}{\ks}$-decentralized. Fourth, by definition, agent $\ell$ is the fastest agent, among all sets with $\min\{ k, \ks \}$ participants; therefore, $\Mech^{\text{I-GSP}}$ is $\frac{\min\{ k, \ks \}}{\ks}$-efficient. 
\end{proof}

\section{Experiments}\label{sec: experiments}
In this section, we evaluate the performance of our proposed mechanisms using synthetic data in two sets of experiments. In the first set, we compare the decentralization factor of the equal reward mechanism, $R^{\text{eq}}$, and the harmonic mechanism $R^{\text{harm}}$. In the second set, we compare the fastest solution submitted under the non-revelation harmonic mechanism $R^{\text{harm}}$ and the revelation mechanism inverse generalized second price $\Mech^{\text{I-GSP}}$, under the same number of participants.  

In the first set of experiments, we sample the agents' costs from two different probability distributions: Uniform $(0,1)$ and Exponential with the parameter $\lambda=1$ (and normalized to $(0,1)$). These distributions capture different scenarios about the type of participating agents. In the Uniform case, all agents are assumed to be identical. The Exponential distribution represents the scenario when the majority of the agents have low costs (e.g., normal computers) and only a few agents have high costs (e.g., supercomputers).

We perform simulations for instances with $n$ participating agents, where $n\in[1,1000]$. For each instance, we compute the decentralization factor $\ks$, which represents the maximal set of participating agents that can be adequately rewarded. We then compute the number of agents participating in equilibrium under each mechanism, and the corresponding decentralization factor, which is the ratio of participating agents to $\ks$. The experiments are repeated 500 times for each $n$, and the average values are shown. In \Cref{fig: ParticipationU} and \Cref{fig:ParticipationExp} show the number of participants in equilibrium for each mechanism (orange for $R^{\text{harm}}$ and green for $R^{Eq}$) alongside $\ks$ (blue). In \Cref{fig:DecentrU,fig:DecentrExp}, we plot the decentralization factor of $R^{\text{harm}}$ (orange) and $R^{Eq}$ (green). We compare the decentralization factors with the theoretical upper (purple) and lower bounds (red). The results show that, for random instances, both algorithms vastly outperform the theoretical upper bound.

\begin{figure}[ht]
\centering
\begin{subfigure}{.5\textwidth}
  \centering
  \includegraphics[width=0.9\linewidth]{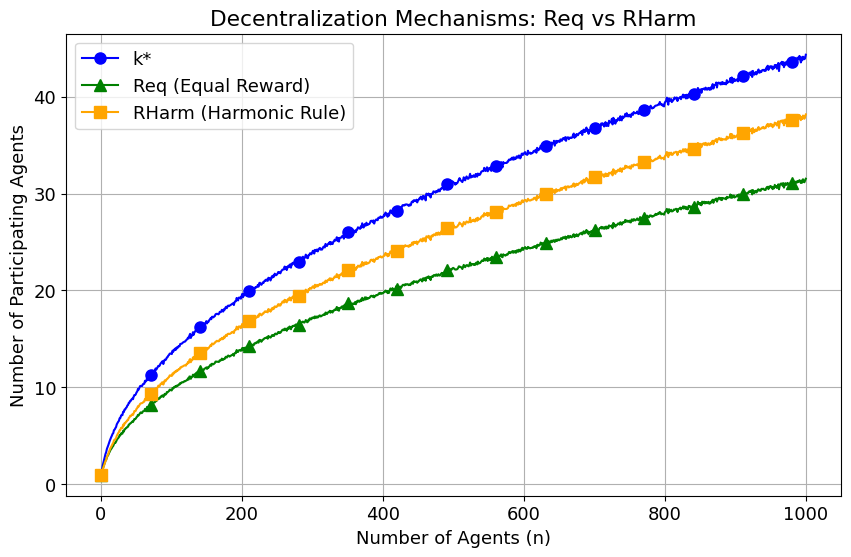}
  \caption{Number of participants}
  \label{fig: ParticipationU}
\end{subfigure}%
\begin{subfigure}{.5\textwidth}
  \centering
  \includegraphics[width=0.9\linewidth]{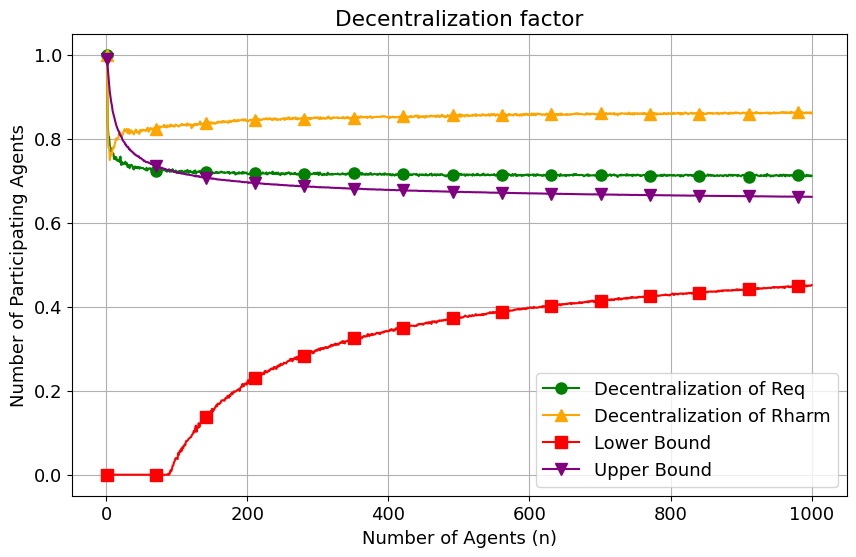}
  \caption{Decentralization factor}
  \label{fig:DecentrU}
\end{subfigure}
\caption{Samples are drawn from Uniform $(0,1)$ distribution}
\label{fig:Uniform}
\end{figure}

\begin{figure}[ht]
\centering
\begin{subfigure}{.5\textwidth}
  \centering
  \includegraphics[width=0.8\linewidth]{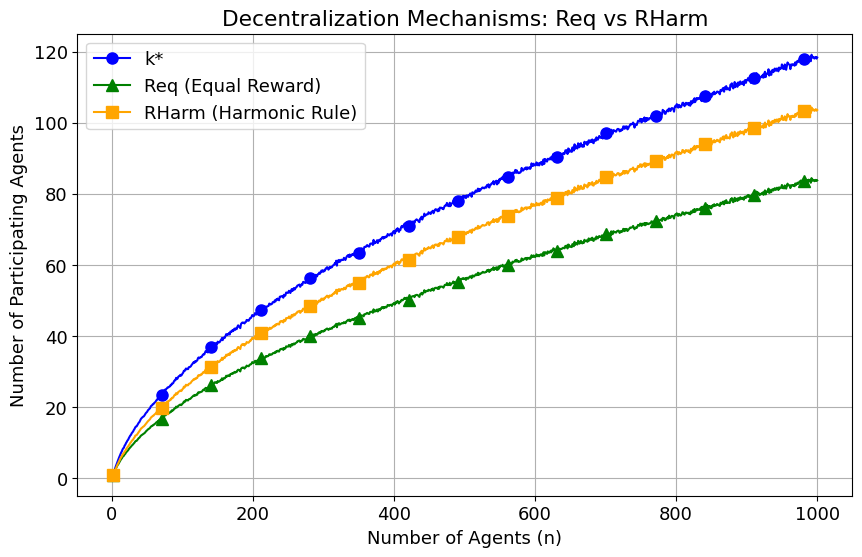}
  \caption{Number of participants}
  \label{fig:ParticipationExp}
\end{subfigure}%
\begin{subfigure}{.5\textwidth}
  \centering
  \includegraphics[width=0.8\linewidth]{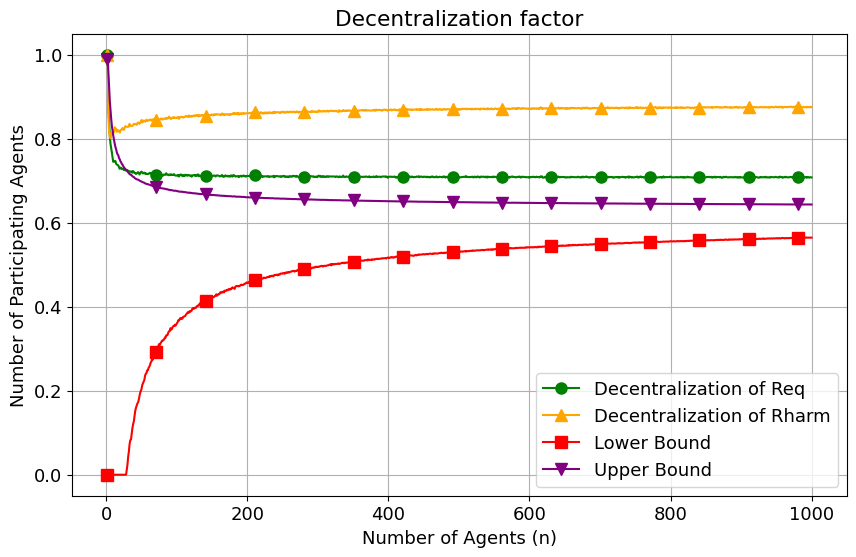}
  \caption{Decentralization factor}
  \label{fig:DecentrExp}
\end{subfigure}
\caption{Samples are drawn from Exponential  distribution with parameter $\lambda=1$ normalized to $(0,1)$ }
\label{fig:Exp}
\end{figure}

In the second set of experiments, we again sample agents' costs from Uniform and Exponential distributions and assign times according to the Uniform $[0,10]$ distribution.  To ensure that lower-cost agents take longer to complete the task, we now sample the costs and times separately and sort both lists. We first determine the number of participants using the $R^{\text{harm}}$ and then apply $\Mech^{\text{I-GSP}}$ to select the same number of participants. We record the fastest task completion time under each mechanism. The results confirm that  $\Mech^{\text{I-GSP}}$ consistently outperforms $R^{\text{harm}}$ in terms of the fastest completion time, as expected, since it prioritizes selecting the fastest agents while maintaining feasibility.

\begin{figure}[ht]
\centering
\begin{subfigure}{.5\textwidth}
  \centering
  \includegraphics[width=0.9\linewidth]{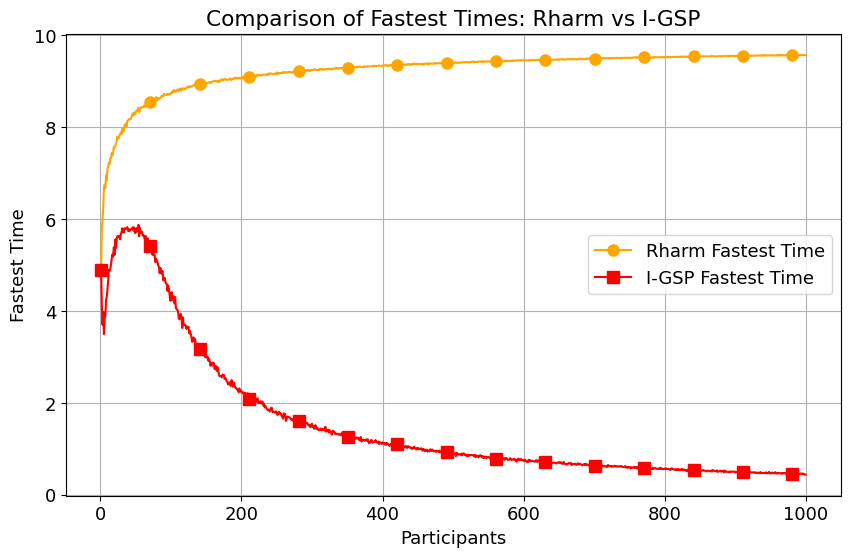}
  \caption{Costs are drawn form Uniform $(0,1)$}
  \label{fig:timeU}
\end{subfigure}%
\begin{subfigure}{.5\textwidth}
  \centering
  \includegraphics[width=0.9\linewidth]{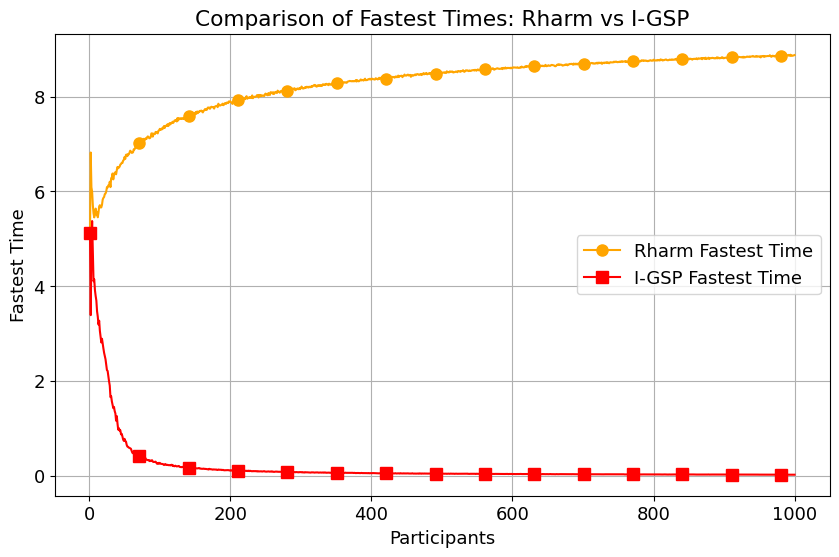}
  \caption{Costs are drawn from Exponential with $\lambda=1$}
  \label{fig:timeExp}
\end{subfigure}
\caption{Times are drawn from Uniform $(0,10)$ }
\label{fig:times}
\end{figure}

\section{Conclusion}\label{sec: conclusion}

We introduce a simple model for analyzing the trade-offs between decentralization and efficiency in verifiable computation. We completely characterize the power and limitations of revelation and non-revelation mechanisms in our model. Our results show provable advantages for revelation mechanisms, i.e., a so-called \emph{revelation gap}.

Our work points to several interesting future avenues. As a first step towards ensuring decentralization and efficiency in verifiable computation, we study the setting where there is a single request/client. The setting where solution providers simultaneously interact with (and budget their computational resources across) multiple clients' mechanisms is left as a future endeavor. 
A limitation of our model is the assumption that it is not possible for an agent to copy answers from other agents (i.e. claim rewards without doing any work). In the real-world, therefore, for our theoretical guarantees to go through, there needs to be a secure and authenticated channel (e.g. answers are signed and encrypted) that the agents can use to interact with the client. 
Our model assumes that an agent has a fixed cost that they pay to produce a solution at a fixed time; we leave it open to consider a setting where agents can incur a higher cost to provide a faster solution, but we conjecture that incentives have a milder effect in such a setting.  Finally,  we do not consider potential collusion among the solution providers.

\newpage
\bibliographystyle{alpha}
\bibliography{refs}

\end{document}